\documentclass[runningheads]{llncs}

\usepackage{microtype}
\usepackage{graphicx}
\usepackage{caption}
\usepackage{subcaption}
\usepackage{float}
\usepackage{multirow}
\usepackage[ruled,linesnumbered]{algorithm2e}
\usepackage{mathtools}
\usepackage{tikz} 
\usetikzlibrary{calc}
\usepackage{hyperref}
\usepackage[capitalise]{cleveref}

\usepackage{amsfonts,amsmath,amssymb}
\usepackage{soul}
\usepackage{xcolor}
\usepackage{xspace}
\usepackage{paralist}
\usepackage{complexity}
\usepackage{comment}
\usepackage{graphics}
\usepackage{adjustbox}

\definecolor{mygreen}{rgb}{0, 0.5, 0}
\newcommand{\codecomment}[1]{{{\textcolor{mygreen}{#1}\xspace}}}

\Crefname{observation}{Observation}{Observations}
\Crefname{algorithm}{Algorithm}{Algorithms}
\Crefname{section}{Sect.}{Sects.}
\Crefname{observation}{Observation}{Observations}
\Crefname{lemma}{Lemma}{Lemmas}
\Crefname{claim}{Claim}{Claims}
\Crefname{figure}{Fig.}{Figs.}
\Crefname{figure}{Fig.}{Figs.}
\Crefname{invariant}{Inv.}{Invs.}
\Crefname{enumi}{Condition}{Conditions}
\Crefname{property}{Property}{Properties}
\Crefname{assumption}{Assumption}{Assumptions}

\newcommand{\blue}[1]{{{\textcolor{blue}{#1}\xspace}}}

\renewcommand{\emph}[1]{\blue{\em #1}}
\SetKwInput{KwInput}{Input}
\SetKwInput{KwOutput}{Output}
\SetNlSty{}{}{}

\SetCommentSty{mycommfont}
\let\oldnl\nl
\newcommand\nonl{
\renewcommand{\nl}{\let\nl\oldnl}}

\hypersetup{colorlinks=true,urlcolor=blue,citecolor=blue,linkcolor=blue}

\newcommand{\qedclaim}{\hfill $\blacksquare$}


\begin{document}
\title{Treebar Maps: Schematic Representation\\ of Networks at Scale}
\titlerunning{Treebar Maps}
\author{Giuseppe Di Battista\texorpdfstring{\href{https://orcid.org/0000-0003-4224-1550}{\protect\includegraphics[scale=0.45]{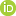}}}{} \and
	Fabrizio Grosso\texorpdfstring{\href{https://orcid.org/0000-0002-5766-4567}{\protect\includegraphics[scale=0.45]{orcid}}}{}\and
	Silvia Montorselli\and \\
	Maurizio Patrignani\texorpdfstring{\href{https://orcid.org/0000-0001-9806-7411}{\protect\includegraphics[scale=0.45]{orcid}}}{}}
\authorrunning{Di Battista et al.}
\institute{Roma Tre University, Rome, Italy
	\email{\{giuseppe.dibattista,fabrizio.grosso,maurizio.patrignani\}@uniroma3.it, sil.montorselli@stud.uniroma3.it}}

\maketitle

\begin{abstract}
Many data sets, crucial for today's applications, consist essentially of enormous networks, containing millions or even billions of elements. Having the possibility of visualizing such networks is of paramount importance. We propose an algorithmic framework and a visual metaphor, dubbed treebar map, to provide schematic representations of huge networks. Our goal is to convey the main features of the network's inner structure in a straightforward, two-dimensional, one-page drawing. This drawing effectively captures the essential quantitative information about the network's main components. Our experiments show that we are able to create such representations in a few hundreds of seconds. We demonstrate the metaphor's efficacy through visual examination of extensive graphs, highlighting how their diverse structures are instantly comprehensible via their representations.
\keywords{
Schematic representations
\and
Core-connectivity tree
\and
Coreness
\and
Connected components
}
\end{abstract}

\section{Introduction}

Many data sets, crucial for today's applications, consist essentially of enormous networks, containing millions or even billions of elements. Hence, having the possibility of visualizing such networks is of paramount importance.
However, drawing all the details of such huge networks using node-link representations is obviously unpractical (see also \cite{DBLP:journals/tvcg/YoghourdjianYDL21}) even if the layout could be efficiently computed in a distributed environment~\cite{Arleo2017124}. 
This originated a research strain devoted to finding methods that produce drawings where either the graph is only partially represented or it is schematically visualized. 

Proxy drawings, such as those described in~\cite{DBLP:conf/apvis/NguyenMLE18,DBLP:journals/ans/HuCHCMTEM21,DBLP:journals/cg/ImreTWZFW20,DBLP:journals/tvcg/NguyenHEM17}, fall under the category of the first type. In a proxy drawing, a graph is depicted using a smaller graph drawing that retains certain characteristics of the original graph. 

Several types of schematic representations have been proposed. See, e.g., \cite{AbelloZNHA22,dfpt-srlbg-jgaa-21,DBLP:journals/tvcg/YoghourdjianDKM18,terrain}.
In~\cite{dfpt-srlbg-jgaa-21} algorithms are presented for constructing schematic representations of graphs that can be decomposed using a set of separating vertices.
In graph thumbnails~\cite{DBLP:journals/tvcg/YoghourdjianDKM18}, every connected component of a graph is depicted as a disk, while biconnected components are represented by disks nested within the disk of the connected component they belong to. Each biconnected component comprises a stack of disks, with each disk representing a set of vertices with the same ``coreness''. The size of each disk is proportional to the number of vertices in the set.
The approach of~\cite{terrain} uses a $3$-dimensional representation of a tree-map, where the tree is given by the inclusion relationships between sets of vertices having certain scalar attributes.
In \cite{AbelloZNHA22} the proposed metaphor for the schematic representation is inspired to the map of a city.
An approach that is somehow in between proxy drawings and schematic representation is proposed in~\cite{browvis-22}, where, for graphs with up to two million edges, groups of vertices belonging to the same cluster are represented by clouds and only inter-cloud edges are shown.

A vertex of a graph $G$ has \emph{coreness $k$} (see \cref{fig:example-node-link}) if it belongs to the maximal induced subgraph of $G$ such that each vertex has degree at least $k$, but it does not belong to the maximal induced subgraph of $G$ such that each vertex has degree at least $k+1$~\cite{seidman-83}. 
Intuitively, the higher the node coreness the more the node belongs to the denser innermost kernel of the graph. Each connected component of the set of vertices with coreness at least $k$ is called a \emph{$k$-core}.
Hence, a $k$-core is a maximal induced connected subgraph such that each node has degree at least $k$~\cite{seidman-83}. Equivalently, a $k$-core is a connected component of the graph obtained from $G$ by repeatedly deleting all vertices of degree less than~$k$~\cite{batagelj-03}.

$k$-cores have several applications in social networks, bioinformatics, and neurosciences. A survey can be found in \cite{DBLP:journals/vldb/MalliarosGPV20}.
They have been also used for the visualization of graphs within the node-link metaphor.
In \cite{DBLP:conf/nips/Alvarez-HamelinDBV05}, vertices in the same $k$-core are placed on the same circle of a sequence of concentric circles. However, the drawing standard does not seem suitable for very large graphs.
The authors of \cite{nguyen-17} propose a concentric circle placement of the vertices, based on their coreness, and a variation of a force-directed layout to effectively display the structure of the graphs. The technique is tested on graphs with at most $200,000$ edges.

In this paper we consider the problem of succinctly conveying the inner structure of the denser portions of massive graphs. To this aim, we represent the number of $k$-cores, their sizes, and their containment relationship (simultaneously for the whole range of values of $k$) with a two-dimensional, one-page diagram that we call \emph{treebar map}. The treebar map combines the ability of conveying inclusion relationships of a nested treemap with the effectiveness of representing scale-free quantities of a logarithmic bar-chart. In order to scale with respect to the number of coreness values of the vertices of huge graphs, we allow for a simplification of the treebar map analogous to the one that is used for contour lines showing elevations in a geographic map.  
In order to scale also with respect to computational times, we exploit a data structure that we call \emph{core-connectivity tree}. Such a tree is similar to the ``scale tree'' of~\cite{terrain} where the scalar attribute assigned to each vertex is its coreness. We describe an algorithm to produce core-connectivity trees in time $O(n + m\alpha(n))$, where $\alpha(n)$ is the inverse of the Ackermann function, meeting the efficiency bound that was envisaged in~\cite{terrain}.   

We show the effectiveness of treebar maps by presenting the first visual analysis of several real-world graphs ranging from 10 million to about 2 billion edges.
Due to the size of the considered instances, we could not compare with the alternative 3D visualization of~\cite{AbelloZNHA22,terrain}, which admittedly have scalability problems~\cite{terrain}, and which need full 3D navigation to avoid occlusion~\cite{AbelloZNHA22,terrain}.
On the other hand, Thumbnails~\cite{DBLP:journals/tvcg/YoghourdjianDKM18} does not represent the $k$-core of the network, but only how the number of vertices of a biconnected component change with the coreness value. Also, this metaphor has been tested on graphs with up to a couple of million of edges.

\section{Core-Connectivity Trees}\label{sec:core-connectivity-trees}

To efficiently build treebar maps we exploit a data-structure that we call ``core-connectivity tree''.
We only consider graphs that are undirected, without isolated vertices, and without self-loops.
 
We denote by $c(u)$ the coreness of a vertex $u$.
According to the definition, the $1$-cores of $G$ are its connected components.
Conventionally, we assume the $0$-core of~$G$ to contain all the vertices of~$G$.
Observe that if $G$ is connected its unique $1$-core and its $0$-core coincide.
We define $c(G)$ as the maximum~$k$ such that $G$ has a non-empty $k$-core.
Consider a $k$-core $G_k$ and a $(k+1)$-core $G_{k+1}$ of $G$. We have the following property.

\begin{property}[\cite{seidman-83,batagelj-03}]\label{pr:core-subsets}
Either $G_k$ and $G_{k+1}$ are vertex-disjoint or the set of vertices of $G_{k+1}$ is a subset of the set of vertices of $G_k$.
\end{property}

The \emph{core-connectivity tree} $T$ of an $n$-vertex graph $G$ is the rooted tree defined as follows. The leaves of $T$ are the vertices of $G$. The non-leaf nodes of $T$ are the $k$-cores of $G$, with $k=0, \dots, c(G)$. The root of $T$ is its $0$-core. Let $u$ be a leaf of $T$, its parent is the $k$-core with maximum $k$ containing $u$. A $k$-core $G_k$ is the parent of a $(k+1)$-core $G_{k+1}$ if the set of vertices of $G_{k+1}$ is a subset (not necessarily proper) of the set of vertices of $G_k$. The fact that $T$ is a tree directly descends from \cref{pr:core-subsets}. 
Observe that a $k$-core and a $(k+1)$-core can have the same set of vertices. Hence, $T$ can have nodes with just one child. To save space we contract the chains of nodes with one child into a single node. Namely, let $\mu_{k}, \dots, \mu_{k+h}$ be a maximal chain of nodes, corresponding to $k$-, $(k+1)$-, \dots, $(k+h)$-cores, respectively, and such that the only child of $\mu_i$ is $\mu_{i+1}$ $i=1,\dots,k+h-1$, we contract $\mu_{k}, \dots, \mu_{k+h}$ into one node $\mu$, corresponding to all the $i$-cores of the nodes of the chain. Hence, we denote by $\min_c (\mu)=k$ and by $\max_c (\mu)=k+h$ the minimum and maximum corenesses associated with~$\mu$, respectively. More generally, we label all nodes of $T$ with their minimum and maximum coreness.

\begin{algorithm}[h!]
\caption{{\bf Procedure} \textsc{InsertEdge}$(F,(u,v))$}
\label{alg:insert-edge}
\SetKwInOut{Input}{Input}
\SetKwInOut{Output}{Output}
\SetKwInOut{SubProcedures}{Uses}
\SetKwInOut{SideEffects}{Side Effects}
\Input{A forest $F$ whose leaves are the vertices of $G$, an edge $(u,v)$ of~$G$}
\SubProcedures{%
{\sc Leaf}$(F,u)$: Returns the leaf of $F$ corresponding to $u$ \\
{\sc Root}$(\mu)$: Returns the root of the tree containing $\mu$ in $F$ \\
{\sc NewNodeWithChildren}$(\mu_1,\mu_2)$: Creates a new node for $F$\\ ~~~~~and adds children $\mu_1$ and $\mu_2$\\
{\sc AddChild}$(\rho,\mu)$: Adds $\mu$ as a child of $\rho$ in $F$ \\
}
\SideEffects{Updates $F$}
\BlankLine
$\mu_u \leftarrow \textsc{Leaf}(F,u)$; \\
$\mu_v \leftarrow \textsc{Leaf}(F,v)$; \\
$\rho_u \leftarrow \textsc{Root}(\mu_u)$; \\
$\rho_v \leftarrow \textsc{Root}(\mu_v)$; \\
\uIf{$(\rho_u=\mu_u) \textrm{~and~} (\rho_v=\mu_v)$ }{
        $\rho \leftarrow \textsc{NewNodeWithChildren}(\mu_u,\mu_v)$; \\
        $\rho.maxcoreness \leftarrow c(u,v)$;
}
\uElseIf{$(\rho_u \neq \mu_u) \textrm{~and~} (\rho_v=\mu_v)$}{
    \lIf{$\rho_u.maxcoreness = c(u,v)$}{
       $\textsc{AddChild}(\rho_u,\mu_v)$
    }\uElse{
        \codecomment{/* Necessarily $\rho_u.maxcoreness > c(u, v)$ */}\\
        $\rho \leftarrow \textsc{NewNodeWithChildren}(\rho_u,\mu_v)$; \\
        $\rho.maxcoreness \leftarrow c(u,v)$;         
    }    
}
\uElseIf{$(\rho_u = \mu_u) \textrm{~and~} (\rho_v \neq \mu_v)$}{
    \lIf{$\rho_v.maxcoreness = c(u,v)$}{
       $\textsc{AddChild}(\rho_v,\mu_u)$
    }\uElse{
        \codecomment{/* Necessarily $\rho_v.maxcoreness > c(u, v)$ */}\\
        $\rho \leftarrow \textsc{NewNodeWithChildren}(\mu_u,\rho_v)$; \\
        $\rho.maxcoreness \leftarrow c(u,v)$;         
    }    
}

\lElseIf{$(\rho_u = \rho_v)$}{
    \Return \codecomment{~/* Nothing to do */}
}
\uElse{
    \codecomment{/* $(\rho_u \neq \mu_u) \textrm{~and~} (\rho_v \neq \mu_v) \textrm{~and~} (\rho_u \neq \rho_v)$ */} \\
    \lIf{$\rho_u.maxcoreness < \rho_v.maxcoreness$}{
        $\textsc{AddChild}(\rho_v,\rho_u)$
    }
    \uElse{
        \codecomment{/* $\rho_u.maxcoreness \geq \rho_v.maxcoreness$ */} \\
        $\textsc{AddChild}(\rho_u,\rho_v)$;        
    }
}

\end{algorithm}

\begin{theorem}\label{th:core-connectivity-tree}
Le $G$ be a graph with $n$ vertices and $m$ edges. The core-connectivity tree of $G$ has $O(n)$ nodes and can be computed in $O(n+m\alpha(n))$ time, where $\alpha(n)$ is the inverse of the Ackermann function.
\end{theorem}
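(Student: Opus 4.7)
$T$ has exactly $n$ leaves, one per vertex of $G$, and after the chain contraction of the definition every internal node of $T$ has at least two children (otherwise it would have been merged with its unique child). A standard counting argument on rooted trees with branching factor at least two at every internal node then gives at most $n-1$ internal nodes, so $T$ has $O(n)$ nodes overall.

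\textbf{Algorithm and running time.} To compute $T$ within the claimed bound I would: (i)~compute $c(u)$ for every vertex in $O(n+m)$ time via the Batagelj--Zaversnik peeling algorithm~\cite{batagelj-03}; (ii)~assign each edge $(u,v)$ the value $c(u,v) \coloneqq \min(c(u),c(v))$, which is the largest $k$ such that $(u,v)$ belongs to a $k$-core, and bucket-sort the edges in non-increasing order of $c(u,v)$ in $O(n+m)$ time (values lie in $\{0,\dots,n-1\}$); (iii)~initialize $F$ to the $n$ singleton leaves and process the edges in this order by calling \textsc{InsertEdge}$(F,(u,v))$, implementing \textsc{Root} with a union-find structure whose disjoint sets are the vertex sets of the trees of $F$ and triggering a union whenever \textsc{AddChild} or \textsc{NewNodeWithChildren} merges two trees. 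A final $O(n)$ postorder pass fills in the $\min_c$ labels from the $\max_c$ (i.e.\ \textit{maxcoreness}) labels of the children and, if $G$ is disconnected, adds the root representing the $0$-core. Each \textsc{InsertEdge} call does $O(1)$ union-find operations (amortized $O(\alpha(n))$ each) plus $O(1)$ pointer work; moreover every \textsc{NewNodeWithChildren} creates a node of out-degree exactly two, so at most $n-1$ new internal nodes are ever created and the total tree-surgery cost is $O(n)$. Combined with the preprocessing this yields $O(n+m\alpha(n))$.

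\textbf{Correctness.} The main obstacle is verifying that \textsc{InsertEdge}, applied in decreasing order of $c(u,v)$, produces exactly the contracted tree of the definition. I would maintain the invariant that, right after the last edge of coreness strictly greater than~$k$ has been processed, the non-singleton trees of $F$ are in bijection with the $(k{+}1)$-cores of $G$; each such tree has leaf set equal to the vertex set of the corresponding core; its internal structure coincides with the contracted core-connectivity subtree of that core; and the root's \textit{maxcoreness} equals the highest coreness on the contracted chain ending at that root. Using \cref{pr:core-subsets} to guarantee that $k$-cores either nest or are disjoint, one verifies branch by branch that each of \textsc{InsertEdge}'s cases preserves the invariant: the \textsc{NewNodeWithChildren} branches create the node that the definition of $T$ places at level $c(u,v)$ (a newly born core at that coreness); \textsc{AddChild} into an existing root of matching \textit{maxcoreness} corresponds to a chain-contraction step at the same coreness; the ``same root'' branch correctly does nothing because the edge already lies inside an assembled core; and the final branch correctly merges two previously separate sub-cores at coreness $c(u,v)$. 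The case analysis is routine but tedious, and this is where the bulk of a formal write-up would go.
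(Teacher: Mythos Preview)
Your proposal is correct and follows essentially the same route as the paper: compute vertex corenesses in $O(n+m)$, bucket-sort edges by $c(u,v)=\min(c(u),c(v))$, sweep them in decreasing order calling \textsc{InsertEdge} with a union--find structure backing \textsc{Root}, and finish with a linear postprocessing pass. One small discrepancy worth fixing: your correctness invariant asserts that each tree of $F$ already coincides with the \emph{contracted} core-connectivity subtree, but the last branch of \textsc{InsertEdge} (two distinct non-leaf roots) calls \textsc{AddChild} and can place a root with $\textit{maxcoreness}=c(u,v)$ below another root with the same \textit{maxcoreness}; the paper handles this by explicitly building an intermediate tree $T'$ that may contain adjacent equal-label internal nodes and then contracting those edges in a final $O(n)$ pass, and your write-up should include that step (or adjust the invariant to describe $T'$ rather than $T$).
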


\begin{proof}
According to the definition of core-connectivity tree we have that all the non-leaf nodes of $T$ have at least two children and, since the leaves of $T$ are the $n$ vertices of $G$, the number of nodes of $T$ is $O(n)$. 
First observe that the value of $\min_c(\mu)$ for a non-leaf node $\mu$ of $T$ can be easily obtained as $\max_c(\nu)+1$, where $\nu$ is the parent of $\mu$. Hence, in the following we only label each non-leaf node $\mu$ of $T$ with the value of $\max_c(\mu)$. The value of $\min_c(\mu)$ is computed at the end of the process.
Second, observe that if some non-leaf node $\mu$ of $T$ with maximum coreness $\max_c(\mu)$ and with children $\nu_1, \nu_2, ..., \nu_h$ is replaced with a subtree $T_\mu$ such that: (i) all internal nodes of $T_\mu$ are labeled with  $\max_c(\mu)$; (ii) all internal nodes of $T_\mu$ have degree at least two; and (iii) the leaves of $T_\mu$ are $\nu_1, \nu_2, ..., \nu_h$, then tree $T'$ would also have $O(n)$ nodes. 
Also, given $T'$ we can construct in linear time the core-connectivity tree $T$ by contracting all edges that join a pair of nodes $\mu_1$ and $\mu_2$ with $\max_c(\mu_1)=\max_c(\mu_2)$. 
Based on the above consideration we focus on constructing a tree~$T'$. We first describe an $O(nm)$-time algorithm for constructing $T'$ and then we refine it to an $O(n+m\alpha(n))$-time algorithm.

We label all vertices of $G$ with their coreness in $O(n+m)$ time with the algorithm in \cite{DBLP:journals/corr/cs-DS-0310049,DBLP:journals/jacm/MatulaB83}.
Then, we label each edge $(u,v)$ with its \emph{coreness} $c(u,v)$, defined as the minimum between $c(u)$ and $c(v)$. Observe that $c(u,v)$ is the higher value of $k$ for which $(u,v)$ belongs to a $k$-core. We sort the edges in decreasing order of their coreness. This can be done in $O(m)$ time by using a bucket-sort.

Finally, we launch an iterative procedure that, for each edge in decreasing order of coreness, updates a forest $F$ that at the end of the process yields a tree~$T'$. 
Forest $F$ is initialized wiht one isolated node $n(v)$ for each vertex $v$ of $G$. Each non-leaf node $\mu$ of $F$ will be labeled with a maximum coreness value $\max_c(\mu)$.  
We launch Procedure \textsc{InsertEdge}$(F,(u,v))$ on the current edge $(u,v)$ (refer to \cref{alg:insert-edge}). The procedure first finds the roots $\rho_u$ and $\rho_v$ of the two trees of $F$ containing the leaves $\mu_u$ and $\mu_v$ corresponding to $u$ and $v$, respectively (lines 1--4). We have four cases:

\emph{Case 1:} If $\rho_u = \mu_u$ and $\rho_v = \mu_v$ we create a parent $\rho$ for $\mu_u$ and $\mu_v$ and set $\max_c(\rho)= c(u,v)$ (lines 6--7). 
\emph{Case 2:} If $\rho_u \neq \mu_u$ and $\rho_v = \mu_v$ then we have two cases. If $c(u,v) = \max_c(\rho_u)$ then we add $\mu_v$ as a child of $\rho_u$. Otherwise (i.e., $c(u,v) < \max_c(\rho_u)$), we create a parent $\rho$ for $\rho_u$ and $\mu_v$ and set $\max_c(\rho)=c(u,v)$.
\emph{Case 3:} If $\rho_u = \rho_v$, then necessarily $c(u,v) = \max_c(\rho_u)$ and we proceed with the next edge.
\emph{Case 4:} If $\rho_u \neq \mu_u$, $\rho_v \neq mu_v$, and $\rho_u \neq \rho_v$ then 
we have two cases. If $c(\rho_u) < \max_c(\rho_v)$, then we set $\rho_v$ as a child of $\rho_v$ (observe that in this case $c(u,v) = \max_c(\rho_v)$). If $\max_c(\rho_u) = \max_c(\rho_v)$, then we set an arbitrary chosen root, say $\rho_v$, as a child of the other root, in this case $\rho_u$.

The above described algorithm has complexity $O(nm)$. Indeed, we have $m$ iterations, one for each edge $(u,v)$, and the search for the root of the trees $u$ and $v$ belong to may take $O(n)$ time, as the height of the trees of $F$ may be $O(n)$ (recall that two adjacent nodes may have the same maximum coreness).

In order to refine the above algorithm to an $O(n+m\alpha(n))$-time one, we associate $F$ with a Union-Find data structure~\cite{cormen01}. This data structure represents a collection of disjoint sets of elements and elects a representative element for each set. It supports the $O(1)$-time operation of union of two sets and the $O(\alpha(n))$-time search for the representative element of the set a given element belongs to. We use the Union-Find data-structure to allow us the efficient retrieval of the root of the tree of $F$ a node $n(v)$ belongs to. Precisely, for each tree $T_\rho$ of $F$ with root $\rho$ we associate a set $S_\rho$ containing the leaves of $T_\rho$ and we add a pointer from the representative element of $S_\rho$ to $\rho$. When an edge $(u,v)$ is processed, we find in $O(\alpha(n))$-time the representative elements of $n(u)$ and $n(v)$ and, therefore, the roots $\rho(n(u))$ and $\rho(n(v))$. When two trees with roots $\rho'$ and $\rho''$ of $F$ are joined, we merge the corresponding sets $S_{\rho'}$ and $S_{\rho''}$ and point from the representative element of the obtained set $S_\rho = S_{\rho'} \cup S_{\rho''}$ to the root $\rho$ of the obtained tree $T_\rho$ of $F$. 
Once a single tree $T'$ has been obtained, we construct in linear time the core-connectivity tree $T$ by contracting all edges that join a pair of nodes $\mu_1$ and $\mu_2$ with $\max_c(\mu_1)=\max_c(\mu_2)$. Finally, we compute the value of $\min_c(\mu)$ for each internal node $\mu$ of $T$. 
\end{proof}

\section{From Core-Connectivity Trees to Treebar Maps}

In the previous section, we discussed the efficient computation of a core-connecti\-vi\-ty tree, denoted as $T$, for a given graph $G$. Now, let's remove from $T$ the leaves (representing the vertices of $G$) and let's explore a possible method to construct a schematic representation of $G$ using a similar approach outlined in \cite{DBLP:journals/tvcg/YoghourdjianDKM18}. This approach involves creating a tree-map representation of $T$, where each $k$-core corresponds to a closed region whose area is proportional to the size of the $k$-core. Additionally, the inclusion relationship between a $k$-core $a$ and a $(k+1)$-core $b$ is represented by the inclusion between the regions representing $a$ and $b$. To enhance visual perception of the relationship between coreness and density, we can assign colors to the regions based on their coreness. For example, we can use a color scheme where the color of $b$ is darker than the color of $a$, indicating their respective coreness levels.

However, this simple plan encounters two scalability challenges that must be addressed in order to create a practical schematic representation of $G$. The first challenge, referred to as Challenge \emph{CS}, arises due to the varying cardinalities of the $k$-cores involved in $T$. Sets with significantly different magnitudes of cardinalities pose difficulties when attempting to represent them effectively within the tree-map structure. The second challenge, that we call Challenge \emph{CT}, is associated with the size of $T$. As $G$ increases in size or complexity, the corresponding tree $T$ also grows in size, and we will see in \cref{sec:experiments} that graphs with tens of millions of edges can have connectivity-trees with thousands of internal nodes. Visualizing large trees can be challenging in terms of the clarity of the representation.

When addressing Challenge CS, one potential solution would be to assign a non-linear area to the region representing a $k$-core $\mu$, with the aim of accommodating the varying cardinalities of the sets. For instance, a logarithmic dependency on the size of $\mu$ could be considered. However, implementing such a solution would conflict with the tree-map representation, specifically when dealing with inclusion relationships. To illustrate this, let's consider two $k+1$-cores, $\mu_1$ and $\mu_2$, both contained within a $k$-core $\mu$. It is possible that the logarithm of the size of $\mu_1$ added to the logarithm of the size of $\mu_2$ could be smaller than the logarithm of the size of $\mu$. This discrepancy can lead to misleading tree-map representations, particularly when comparing the sizes of the involved sets. Therefore, utilizing a logarithmic or non-linear area assignment, while attempting to address Challenge CS, would introduce inconsistencies in accurately representing the set sizes within the tree-map structure. 

When considering Challenge CT, one possible solution could be to focus on visualizing only the lower sections of $T$, that contain the $k$-cores with the largest $k$ values. This approach aims to highlight the most significant or interesting parts of the graph's structure while disregarding the rest. However, it is important to note that this approach contradicts the intention of providing a comprehensive and high-level overview of the entire graph~$G$. While focusing solely on the highest $k$-cores may be useful for specific analysis purposes, it may not fulfill the broader goal of offering a comprehensive visual summary of the entire graph.

To address the challenges of CS and CT, we introduce a novel visualization paradigm called \emph{treebar map}, which incorporates the following concepts (see in \cref{fig:example-node-link} the node-link drawing of the graph whose treebar map in represented in \cref{fig:example-1}). A treebar map combines a \emph{horizontal-treemap} $\cal H$ to depict the inclusion relationships between $k$-cores with a \emph{bar-chart} $\cal B$ to display the sizes of $k$-cores. The treemap $\cal H$ is horizontal so that each of its regions is topped by the bar of $\cal B$ representing its size in logarithmic scale.

\begin{figure}[tb!]
    \captionsetup[subfigure]{justification=centering}
    \centering
    \begin{subfigure}{\textwidth}
    \centering
    \includegraphics[width=\textwidth]{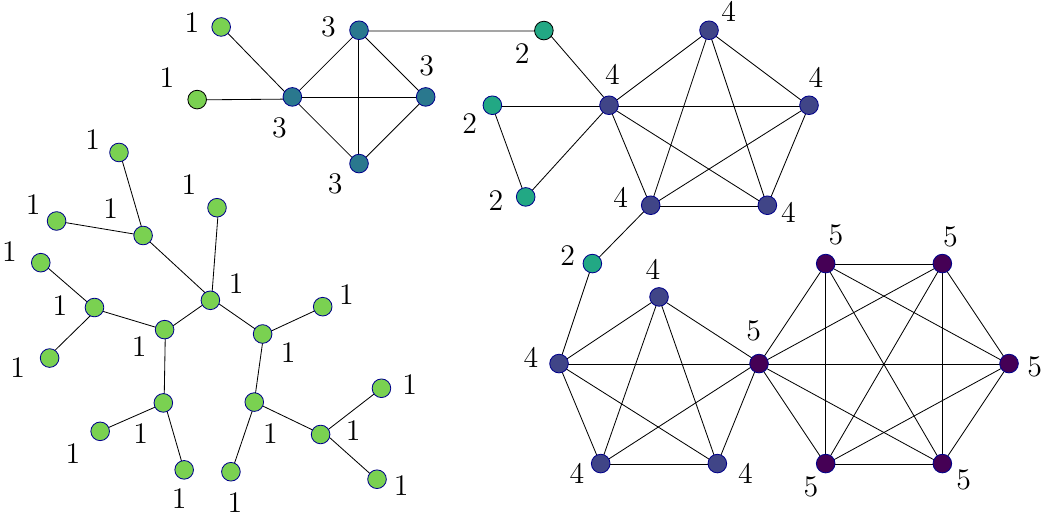}
    \subcaption{}\label{fig:example-node-link}
    \end{subfigure}
    \hspace{0.05cm}
    \begin{subfigure}{\textwidth}
    \centering
    \includegraphics[width=\textwidth]{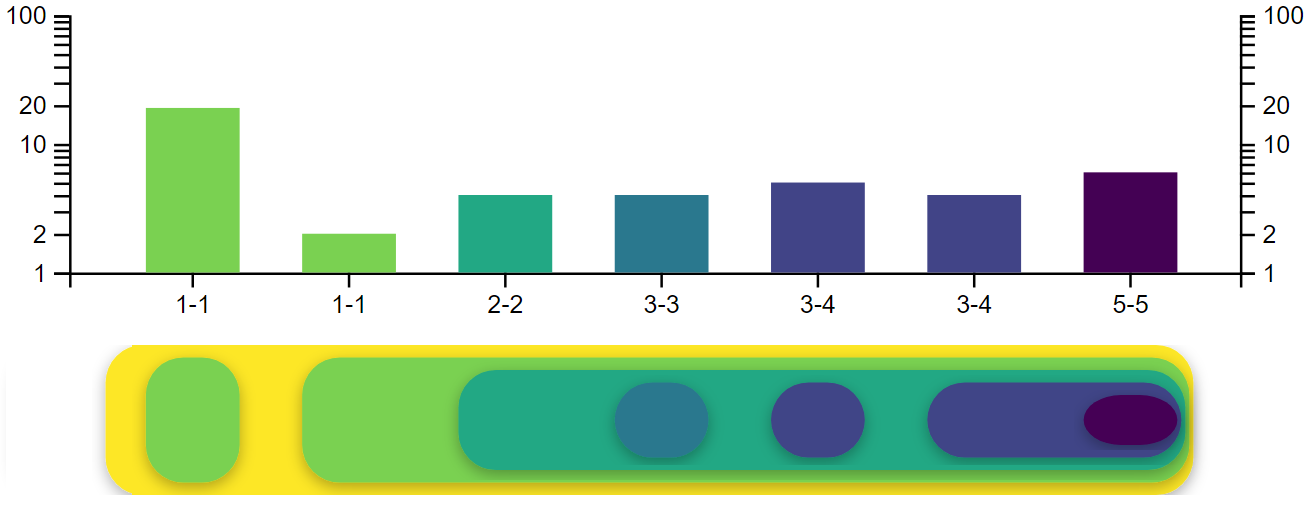}
    \caption{}\label{fig:example-1}
    \end{subfigure}
    \caption{(a) A node-link representation of a graph. Vertices are labeled with their coreness. (b) The treebar map of the graph in (a). The bar-chart is at the top and the horizontal-treemap at the bottom. }\label{fig:my_label1}
\end{figure}

More formally, a treebar map is a pair $\langle\cal H, \cal B\rangle$, where $\cal H$ is a horizontal treemap and $\cal B$ is a bar-chart, defined as follows. Let $\mu$ be a node of $T$. We have seen in \cref{sec:core-connectivity-trees} that, in general, $\mu$ is associated with a range of coreness values. This happens when it represents a $k$-core that does not change for several consecutive values of $k$. Let $\min_c(\mu)$ and by $\max_c(\mu)$ be the minimum and maximum coreness associated with $\mu$, respectively. Let $V(\mu)$ the set of vertices of any of the $k$-cores of $G$ corresponding to $\mu$. We define $n_\mu=|V(\mu)|$.  Further, suppose that $\mu$ has at least one child and let $\mu_1, \dots, \mu_h$ ($h \geq 1$) be the children of $\mu$. Observe that: (1) since we have removed from $T$ all its leaves, $T$ can contain nodes with exactly one child and hence $h$ may be equal to $1$; (2) for each $\mu_i$ we have $\min_c(\mu_i)=\max_c(\mu)+1$. We define $V^{-}(\mu)$ as the subset of vertices of $V(\mu)$ whose coreness is ``not enough'' to enter in any of $V(\mu_1), \dots, V(\mu_h)$. Formally, a vertex $v \in V(\mu)$ belongs to $V^{-}(\mu)$ if $c(v)<\min_c(\mu_1)$. We define $n_\mu^{-}=|V^{-}(\mu)|$. We have that $n_\mu^{-}=n_\mu - \sum_{i=1}^h n_{\mu_i}$.
We order the children of each node of $T$ in ascending order based on the height of their subtrees.

The horizontal tree-map $\cal H$ is recursively defined as an arrangements of rectangles, as shown in \cref{fig:example-1}. {\bf Base case.} If $T$ consists of a single node $\mu$, then $\mu$ is represented with a unit square. {\bf Inductive case.} Let $\mu$ be the root of $T$. Since we are not in the base case, $\mu$ has at least one child. Let $\mu_1, \dots, \mu_k$ ($k \geq 1$) be the children of $\mu$. Then $\mu$ is represented with a rectangle consisting of the horizontal alignment of: a unit square representing $V^{-}(\mu)$ plus the rectangles representing $\mu_1, \dots, \mu_k$, ordered according to the order of $T$. From this definition it descends that the unit squares belonging to $\cal H$ either correspond to leaves of $T$ or correspond to sets $V^{-}(\mu)$ for all non-leaf nodes $\mu$ of $T$.  In \cref{fig:example-1}, the tree-map represents a graph with two $1$-cores. One of the $1$-cores contains a $2$-core, which in turn contains three $3$-cores. The second and third are also $4$-cores. Further, the third one contains a $5$-core. The labels on the tree-map indicate the coreness range of each set of vertices, and the rectangles are colored on a scale according to their maximum coreness. Also, the rectangles are slightly resized to emphasize the inclusion relationships.

We can now define the bar-chart $\cal B$ positioned above $\cal H$. Each bar in $\cal B$ corresponds to either a leaf $\mu$ of $T$ or to a set $V^{-}(\mu)$ for some non-leaf node $\mu$ of $T$. In the first case, the height of the bar is logarithmic in $n_\mu$, while in the second case the height is logarithmic in $n_\mu^{-}$. 
In the bar-chart of \cref{fig:example-1} we observe that the first $1$-core has $19$ vertices. The second $1$-core contains $2$ vertices with coreness $2$, while all its other vertices are part of a $2$-core. This $2$-core has $4$ vertices with coreness $2$, and the remaining vertices contribute to deeper $k$-cores. The deepest is a $5$-core with $6$ vertices.
Note that in $\mathcal{B}$, no logarithms of sums are used. As a result, all the figures can be compared in a fair manner.

Observe that the width of $\cal H$ is directly proportional to the number of nodes in the core-connectivity tree $T$, while its height remains fixed at one. This aspect poses a challenge in creating one-page schematic representations for graphs that have a core-connectivity tree $T$ with a significant number of nodes, such as exceeding 25-30 nodes. As the number of nodes increases beyond these thresholds, the spatial constraints of fitting such a wide structure within a one-page drawing become unsustainable and impractical.
Considering the findings that will be presented in \cref{tab:times}, where the number of nodes in $T$ can reach the order of thousands within our range of interest, it becomes clear that the metaphor presented thus far does not adequately address Challenge CT.

In order to achieve a scalable representation, we employ a visual simplification technique similar to what is used in geographic maps. This simplification becomes evident when examining \cref{fig:example-1}. In this figure, the curves separating one set from another can be interpreted as \emph{contour lines} in a physical map, representing changes in coreness height.
To scale the representation, instead of drawing contour lines at every change of $1$ unit of coreness, we can use contour lines at larger intervals, such as every $2$ units. This is illustrated in \cref{fig:example-2}. In this simplified representation, the transitions between coreness $0$ and $1$, $2$ and $3$, and $4$ and $5$ are no longer visible. While the fine details of the first small hill with coreness $1$ from \cref{fig:example-1} are lost, the overall view and the structure of the graph are preserved.
Importantly, users are familiar with the concept of contour lines and understand that their accuracy can vary depending on the scale of the map. This allows them to interpret the simplified representation accordingly. By leveraging this analogy to geographic maps, we strike a balance between scalability and preserving the key insights of the graph's structure.
We call \emph{coreness scale} (denote by $1:t$) the inverse of the number of coreness units between consecutive contour lines of the treemap. While the coreness scale of \cref{fig:example-2} is $1 : 2$, \cref{fig:example-4} shows a treebar map of the graph of \cref{fig:example-node-link} with coreness scale $1 : 4$.

\begin{figure}[tb]
    \captionsetup[subfigure]{justification=centering}
    \centering
    \begin{subfigure}{0.55\textwidth}
    \centering
    \includegraphics[page=1, width=\textwidth]{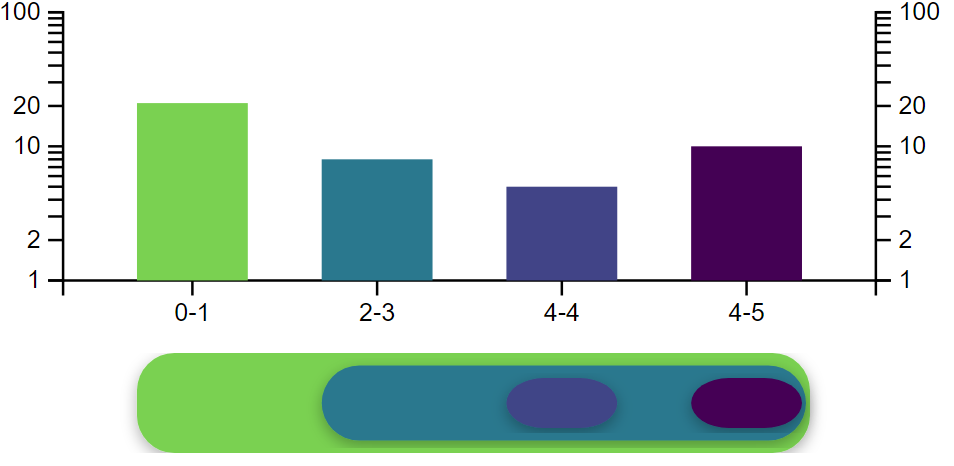}
    \subcaption{}\label{fig:example-2}
    \end{subfigure}
    \hspace{0.05cm}
    \begin{subfigure}{0.43\textwidth}
    \centering
    \includegraphics[page=2, width=\textwidth]{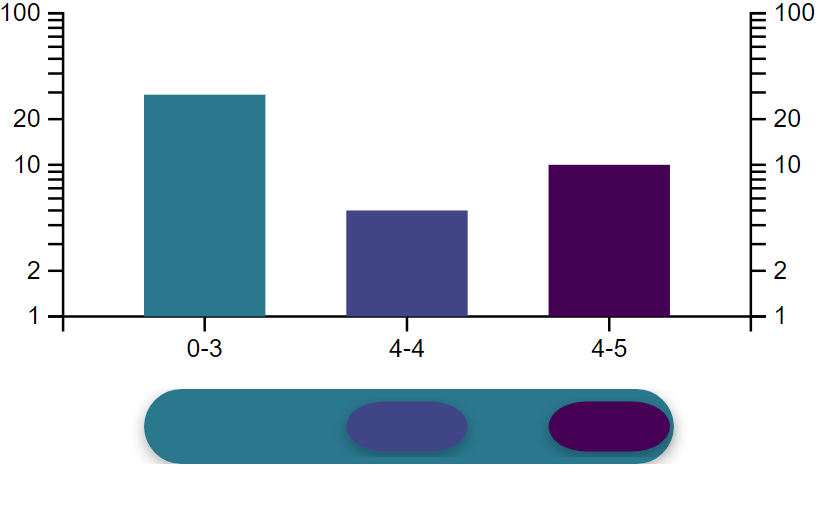}
    \caption{}\label{fig:example-4}
    \end{subfigure}
    \caption{(a) A treebar map of the graph shown in \cref{fig:example-node-link}. The contour lines span two levels of coreness (coreness scale $= 1:2$). The label 4-4 under the third bar indicates that there are no vertices of coreness $5$ in the represented set. (b) A treebar map of the same graph with coreness scale $=1:4$. }\label{fig:my_label}
\end{figure}

More formally, given a core-connectivity tree $T$ and an integer $t>1$, representing the number of coreness units that we want to collapse in the representation, we can process $T$ as follows to obtain a new tree $T'$ that can be represented using the metaphor described above.
To ensure clarity, let us assume that all nodes $\mu$ of $T$ satisfy $\min_c (\mu)=\max_c (\mu)$ (refer to \cref{fig:tree}). If this is not the case ($\min_c (\mu)<\max_c (\mu)$), just replace $\mu$ with a chain $\mu_1, \dots, \mu_k$ consisting of $k = \min_c (\mu) - \max_c (\mu)+1$ nodes, where $\min_c (\mu_i)=\max_c (\mu_i)=\min_c (\mu)+i-1$ ($i=1, \dots, k$). Next, we divide $T$ in layers. The first layer $L_1$ includes nodes $\mu$ with $0 \leq \min_c (\mu) \leq t-1$. The second layer $L_2$ includes nodes $\mu$ with $t \leq \min_c (\mu) \leq 2t-1$. In general, layer $L_i$ includes nodes $\mu$ with $(i-1) \cdot t \leq \min_c (\mu) \leq i \cdot t-1$, until $i \cdot t-1$ is greater or equal than the maximum coreness of $T$. Now, consider the subgraph $F_i$ of $T$ induced by the nodes in layer $L_i$. We observe that $F_i$ is a forest (see \cref{fig:tree-1}). For each tree $T_{ij} \in F_i$ We collapse $T_{ij}$ into a single node $\nu_j$ (see \cref{fig:tree-2}). The parent of $\nu_j$ is the parent of the root of $T_{ij}$ in $T$, if it exists. The children of $\nu_j$ are the children in $T$ of all the leaves of $T_{ij}$. Additionally, we have that $\min_c(\nu_j)=(i-1) \cdot t = \min_{\nu \in T_{ij}} \min_c(\nu)$ and $\max_c (\nu_j)= \max_{\nu \in T_{ij}} (\max_c(\nu))$. Once $T'$ has been computed, we collapse chains of nodes with only one child into a single node, updating its minimum and maximum coreness accordingly. Finally, we represent $T'$ with exactly the same metaphor described above.

\begin{figure}[t]
    \captionsetup[subfigure]{justification=centering}
    \centering
    \hfil
    \begin{subfigure}{0.25\textwidth}
    \centering
    \includegraphics[page=1, width=\textwidth]{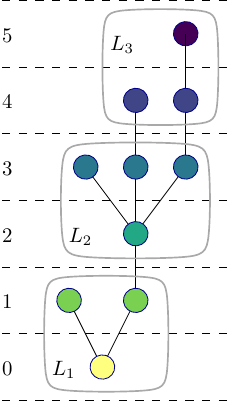}
    \subcaption{}\label{fig:tree-1}
    \end{subfigure}
    \hfil
    \begin{subfigure}{0.25\textwidth}
    \centering
    \includegraphics[page=2, width=\textwidth]{tree.pdf}
    \caption{}\label{fig:tree-2}
    \end{subfigure}
    \hfil
    \caption{(a) Tree $T$ for the graph depicted in \cref{fig:example-node-link}. (b) The same tree after aggregation to generate the treebar map with coreness scale $= 1:2$. }\label{fig:tree}
\end{figure}

\section{Leveraging Treebar Maps for the Visual Analysis of Large Graphs}\label{sec:experiments}

In this section, we demonstrate the efficiency and the effectiveness of treebar maps in providing a concise representation of the inner structure of graphs. We conducted our experiments using a library of graphs provided by Konect~\cite{konect}. Specifically, we created a \emph{sample set} comprising graphs with the following characteristics:
\begin{inparaenum}[(1)]
\item Number of edges ranging from ten million to two billion (we halted at the \texttt{MAXINT} value of 2,147,483,647 for the C language, but it is possible to surpass this limit through certain hacks).
\item Absence of multiple edges.
\item Non-bipartite nature.
\end{inparaenum}
Since the sample set contained a substantial number of graphs (38) derived from Wikipedia hyperlink networks, we chose to include only eight (roughly $20 \%$) of them in our analysis. 

Before computing a treebar map, all graphs underwent a preprocessing step in which self-loops were removed. Additionally, some graphs in the sample set were directed graphs, but we treated them as undirected graphs. In the preprocessing phase, we replaced bidirectional edges with a single undirected edge.

All computations were performed on a server equipped with Intel\textregistered{} Xeon\texttrademark{} Gold 6126 CPU and 768 GB of RAM.

\cref{tab:times} presents the following information for each graph $G$ in the sample set: the number $n$ of vertices, the number $m$ of edges, the number $m'$ of edges after preprocessing, the maximum vertex degree, the computation time for computing the core-connectivity tree, the maximum coreness $c(G)$, and the number of non-leaf nodes in the core-connectivity tree. Regarding the computation time, it should be noted that once the core-connectivity tree is obtained, the time required for generating the treebar map representation is negligible. Further, the system that computes the treebar map automatically proposes to the user a treebar map with a coreness scale $1:t$ such that the bars of the bar-chart are close to $30$, so that the treemap fits a page. This is done by performing a binary search on the value of $t$. The user can anyway change $t$ to any possible value.

\begin{table}[H]
\centering
\caption{Computation times and characteristics of the core-connectivity trees of the graphs in the sample set.}\label{tab:times}
\begin{adjustbox}{width=\linewidth}
\begin{tabular}{|l|r|r|r|r|r|r|r|}
\hline
\begin{minipage}[t][0.8cm][c]{2cm}
Graph $G$ 
\end{minipage}
& 
\begin{minipage}[t][0.8cm][c]{1.5cm}
\centering $n$ 
\end{minipage}
& 
\begin{minipage}[t][0.8cm][c]{1.8cm}
\centering $m$ 
\end{minipage}
& 
\begin{minipage}[t]{1.8cm}$m'$\\ after preprocessing  
\end{minipage}
& 
\begin{minipage}[t]{1.2cm}
Max\\ vertex\\ degree 
\end{minipage}
& 
\begin{minipage}[t]{0.8cm}
Time\\ (s) 
\end{minipage}
& 
\begin{minipage}[t]{0.8cm}$c(G)$
\end{minipage}& 
\begin{minipage}[t]{1cm}
\# non-leaf nodes     
\end{minipage}
\\ \hline
dimacs9-W & 6,262,104 & 15,119,284 & 7,559,642 & 9 & 9.4 & 3 & 117 \\ \hline
as-skitter & 1,696,415 & 11,095,298 & 11,095,298 & 35,455 & 10.9 & 111 & 902 \\ \hline
higgs-twitter-social & 456,626 & 14,855,842 & 12,508,413 & 51,386 & 11.2 & 125 & 282 \\ \hline
dimacs10-in-2004 & 1,382,867 & 13,591,473 & 13,591,473 & 21,869 & 11 & 488 & 4,647 \\ \hline
petster-catdog-friend & 623,754 & 13,991,746 & 13,991,746 & 80,634 & 13.4 & 419 & 823 \\ \hline
zhishi-hudong-internallink & 1,984,484 & 14,869,484 & 14,428,382 & 61,440 & 17.3 & 266 & 5,142 \\ \hline
flickr-links & 1,715,255 & 15,551,250 & 15,551,249 & 27,224 & 15.2 & 568 & 23,579 \\ \hline
petster-carnivore & 623,766 & 15,699,276 & 15,695,166 & 80,637 & 13.4 & 1159 & 8,997 \\ \hline
dimacs-10-eu-2005 & 862,664 & 16,138,468 & 16,138,468 & 68,963 & 13 & 388 & 253 \\ \hline
patentcite & 3,774,768 & 16,518,947 & 16,518,947 & 793 & 22.9 & 64 & 4,142 \\ \hline
dimacs9-CTR & 14,081,816 & 33,866,826 & 16,933,413 & 8 & 21.3 & 3 & 325 \\ \hline
libimseti & 220,970 & 17,359,346 & 17,233,144 & 33,389 & 15.4 & 273 & 273 \\ \hline
zhishi-hudong-relatedpages & 2,452,715 & 18,854,882 & 18,690,759 & 204,277 & 20.3 & 16 & 12,553 \\ \hline
wikipedia\_link\_ro & 903,416 & 32,763,547 & 21,875,288 & 139,792 & 18.1 & 887 & 3,004 \\ \hline
soc-pokec-relationships & 1,632,803 & 30,622,564 & 22,301,964 & 14,854 & 26.5 & 47 & 53 \\ \hline
flickr-growth & 2,302,925 & 33,140,017 & 22,838,276 & 27,937 & 25.2 & 600 & 34,866 \\ \hline
dimacs9-USA & 23,947,347 & 57,708,624 & 28,854,312 & 9 & 40.4 & 3 & 538 \\ \hline
wikipedia-growth & 1,870,709 & 39,953,145 & 36,532,531 & 226,073 & 43.8 & 206 & 292 \\ \hline
soc-LiveJournal1 & 4,846,609 & 68,475,391 & 42,851,237 & 20,333 & 53.9 & 372 & 2,956 \\ \hline
livejournal-links & 5,204,176 & 49,174,464 & 48,709,621 & 15,016 & 60.1 & 374 & 6,871 \\ \hline
wikipedia\_link\_ja & 1,767,268 & 83,202,622 & 65,495,572 & 274,537 & 72 & 887 & 506 \\ \hline
wikipedia\_link\_de & 3,603,726 & 96,865,851 & 77,546,982 & 434,234 & 89.1 & 837 & 1,409 \\ \hline
wikipedia\_link\_it & 2,148,791 & 104,719,994 & 77,875,131 & 286,585 & 75.7 & 899 & 520 \\ \hline
wikipedia\_link\_sv & 6,100,692 & 106,749,786 & 99,864,874 & 2,732,817 & 95.2 & 357 & 310 \\ \hline
wikipedia\_link\_sr & 3,175,009 & 139,586,199 & 103,310,837 & 369,566 & 80 & 1,869 & 718 \\ \hline
orkut-links & 3,072,441 & 117,184,899 & 117,184,899 & 33,313 & 138.3 & 253 & 253 \\ \hline
dimacs10-uk-2002 & 18,483,186 & 261,787,258 & 261,787,258 & 194,955 & 235.1 & 943 & 39,551 \\ \hline
wikipedia\_link\_en & 13,593,032 & 437,217,424 & 334,591,525 & 1,052,326 & 381.2 & 1,114 & 1,215 \\ \hline
twitter & 41,652,230 & 1,468,365,182 & 1,202,513,046 & 2,997,487 & 1828.1 & 2,488 & 2,187 \\ \hline
twitter-mpi & 52,579,682 & 1,963,263,821 & 1,614,106,187 & 3,503,677 & 2412.8 & 2,647 & 32,202 \\ \hline
friendster & 68,349,466 & 2,586,147,869 & 1,811,849,342 & 5214 & 3147.9 & 304 & 329,745 \\ 
\hline
\end{tabular}
\end{adjustbox}
\end{table}

We have computed the treebar maps for all the graphs listed in \cref{tab:times}. However, due to space limitations, some of them are included in the Appendix. Here, we discuss the features of a few examples. Let's consider \cref{fig:wikipedia_link_ja} and \cref{fig:wikipedia_link_de} as our first examples. Both graphs represent Wikipedia link networks. The treebar maps reveal that both graphs exhibit a ``telescopic'' structure, where $k$-cores with increasing values of $k$ are nested inside each other. Additionally, both graphs contain a component with a coreness of approximately 750-900, consisting of about one thousand vertices. However, the graph in Figure \ref{fig:wikipedia_link_de} has a more complex shape, featuring a large gap of coreness of more than 100 units where only four vertices are present. This implies that the vertices with coreness greater than 600 compose a very dense set upon a connected component with much lower density. A quite uniform telescopic structure is shown by the graph represented in \cref{fig:libimseti}. At this level of detail, the vertices appear to be almost uniformly distributed among the coreness levels. In all three cases, the comparability of the graphs is enhanced by the chosen order of the core-connectivity tree.

\begin{figure}[tb!]
    \centering
    \includegraphics[width=\textwidth]{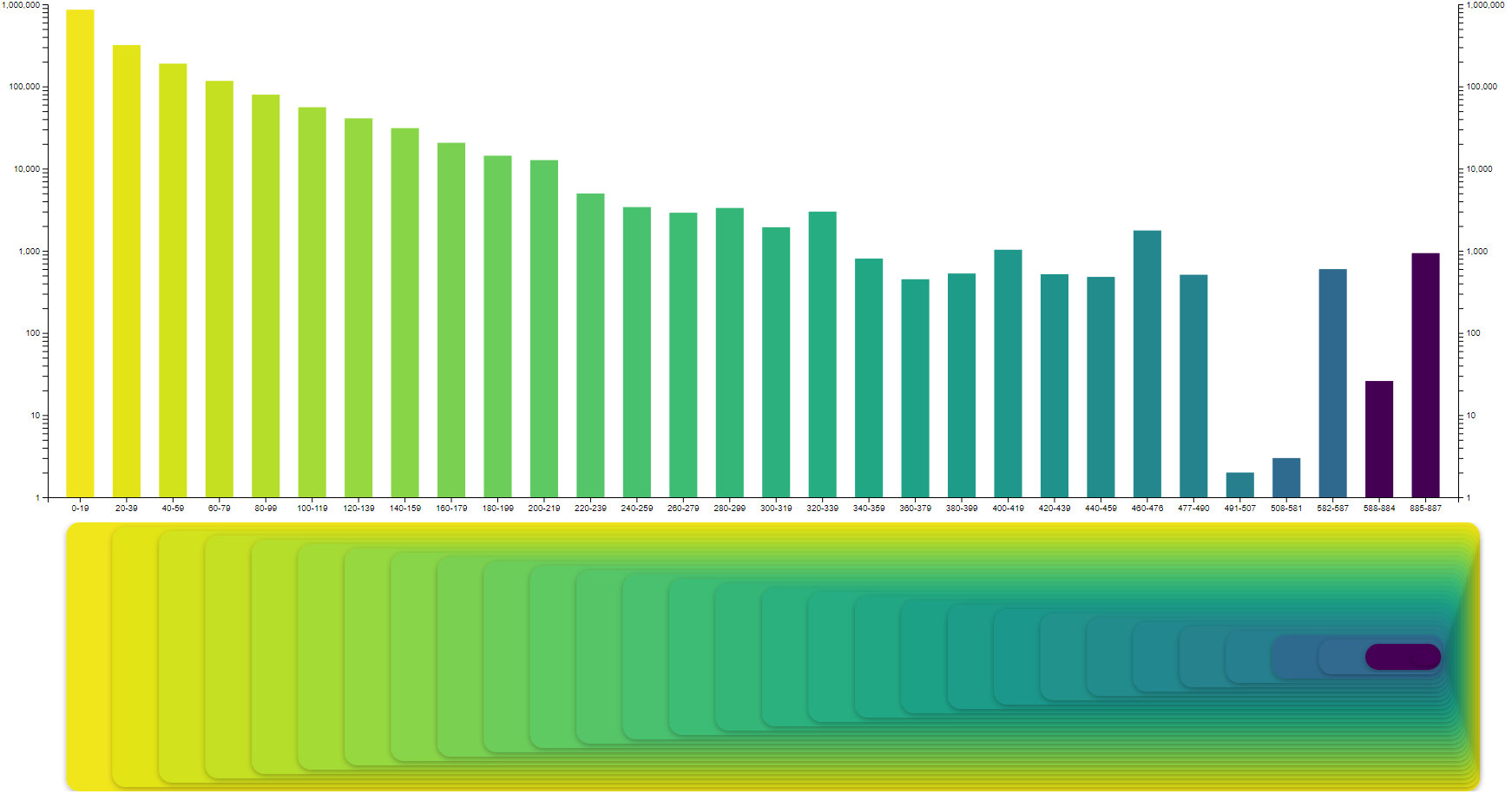}
    \caption{Network wikipedia\_link\_ja, coreness scale $=1:20$}
    \label{fig:wikipedia_link_ja}
\end{figure}

\begin{figure}[]
    \centering
    \includegraphics[width=\textwidth]{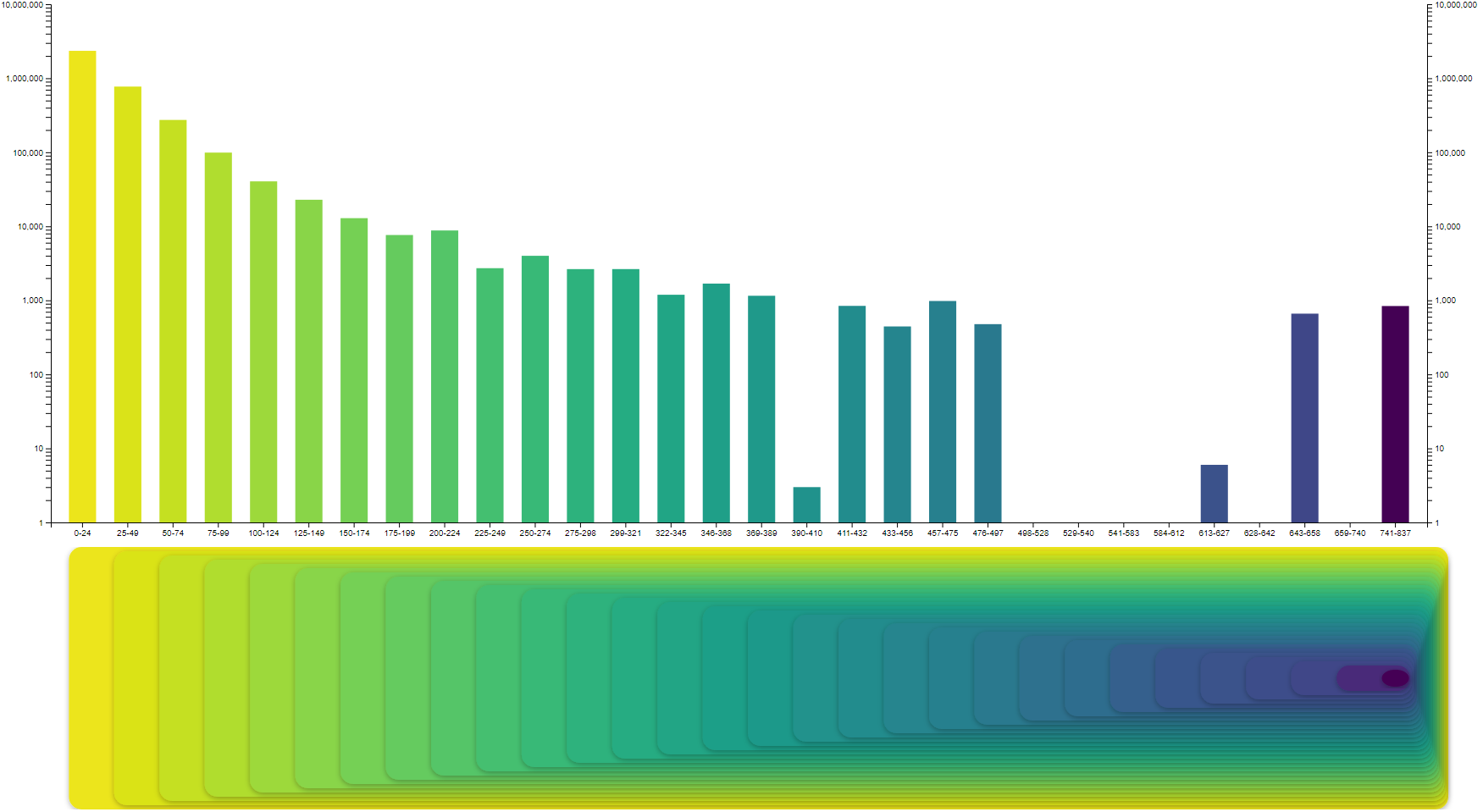}
    \caption{Network wikipedia\_link\_de, coreness scale $=1:25$}
    \label{fig:wikipedia_link_de}
\end{figure}

\begin{figure}[tb!]
    \centering
    \includegraphics[width=\textwidth]{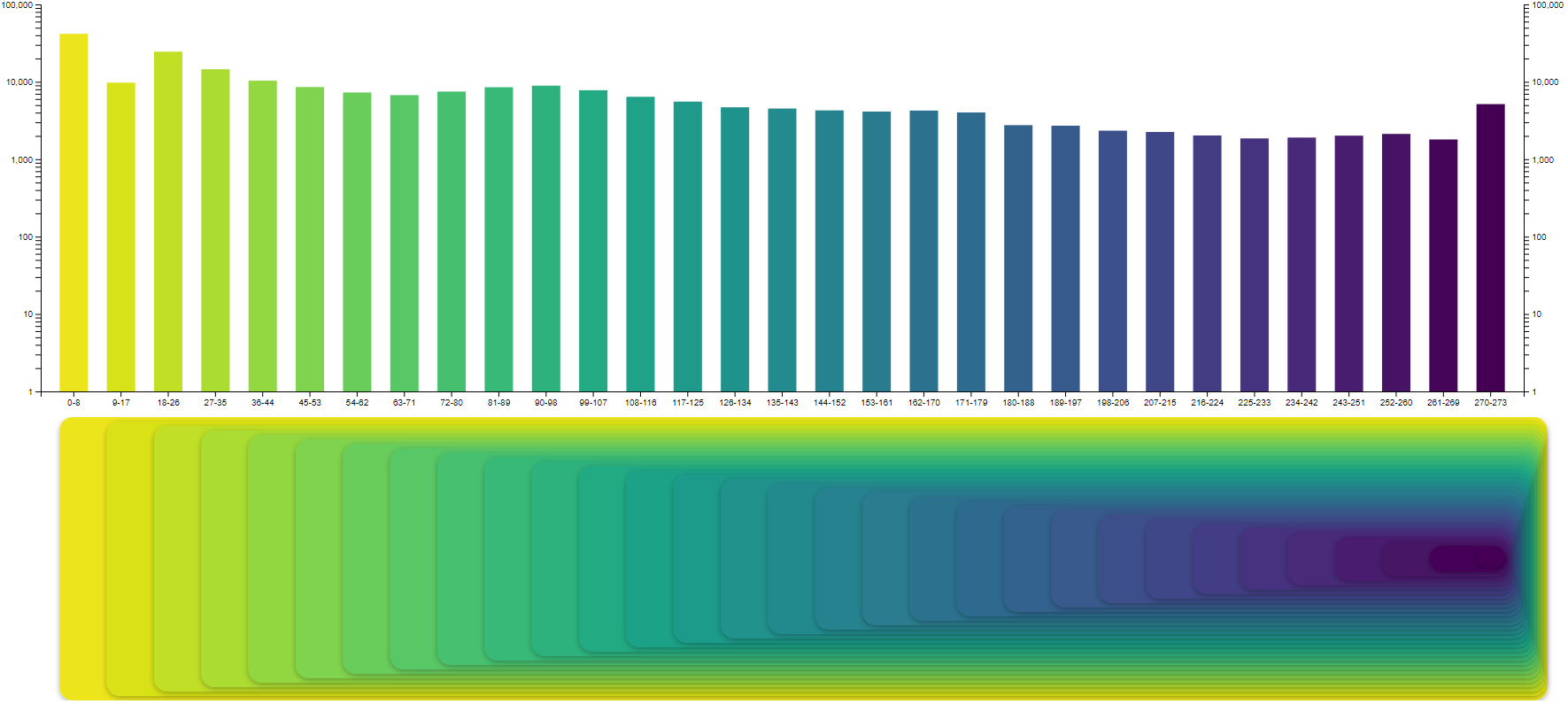}
    \caption{Network libimseti, coreness scale $=1:9$}
    \label{fig:libimseti}
\end{figure}

A contrasting situation is in \cref{fig:livejournal-links}. It shows several $k$-cores with varying values of $k$ and different numbers of vertices. Specifically, there are 13 sets representing components with coreness ranging from 81 to approx.\ 100, each consisting of around 100 vertices. Further, there are two plateaus with higher coreness. 
Within one of these plateaus, four components exhibit a somewhat telescopic~shape.

\begin{figure}[tb!]
    \centering
    \includegraphics[width=\textwidth]{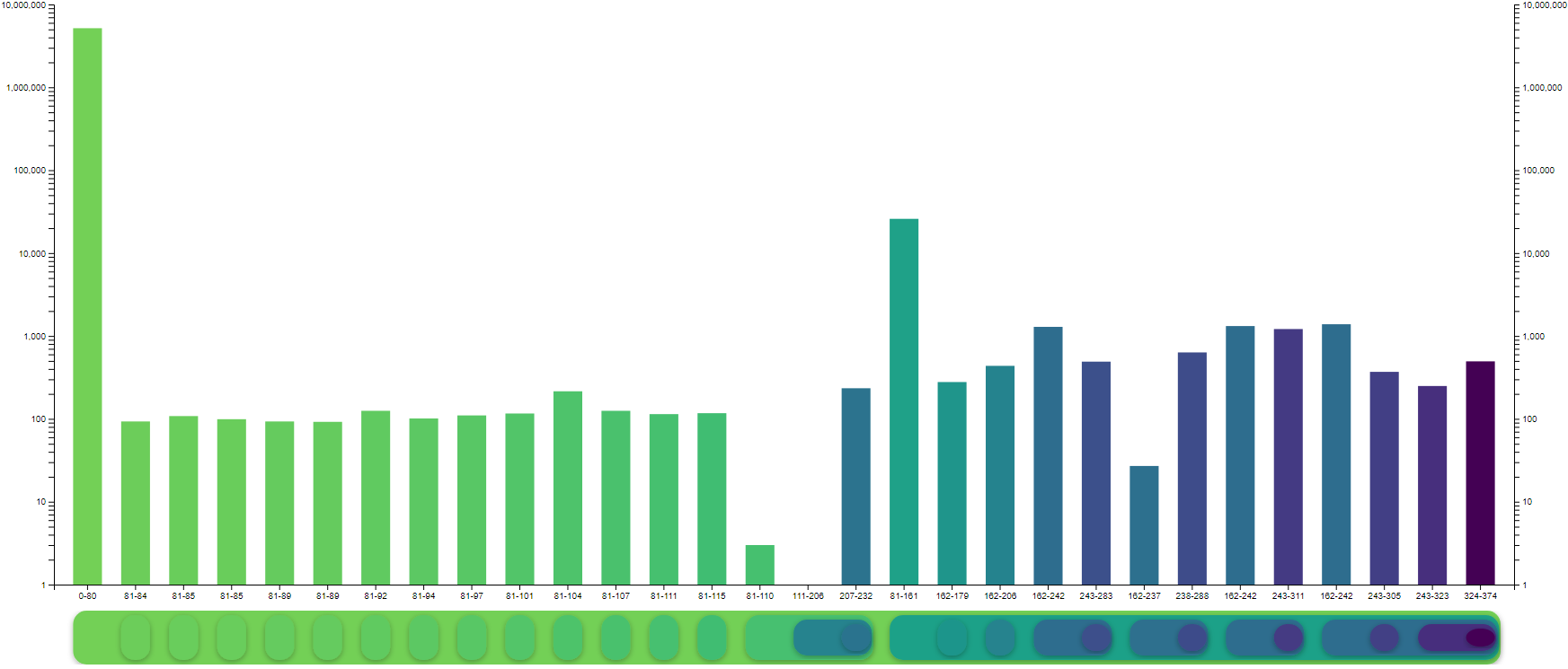}
    \caption{Network livejournal-links, coreness scale $=1:81$}
    \label{fig:livejournal-links}
\end{figure}

A similar situation happens for the graph in \cref{fig:zhishi-hudong-relatedpages}. It has essentially many distinct components of 10-100 vertices each, all having a value of coreness high enough to overcome the inverse of the coreness scale. Also there are more than two million of vertices not having enough coreness to enter such components.

\begin{figure}[tb!]
    \centering
    \includegraphics[width=\textwidth]{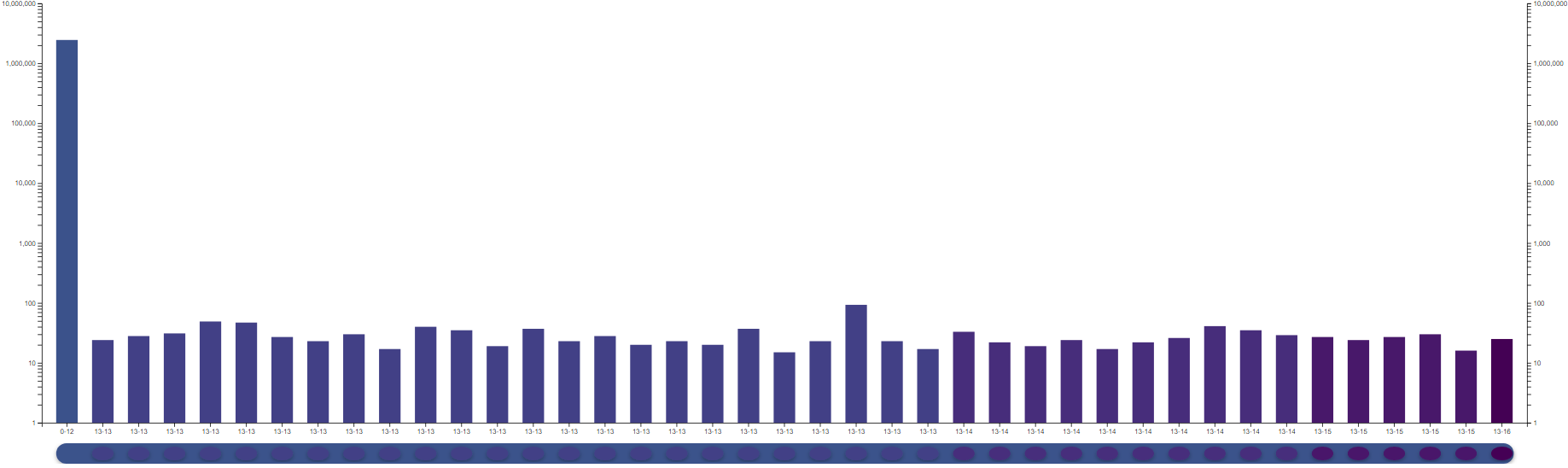}
    \caption{Network zhishi-hudong-relatedpages, coreness scale $=1:13$}
    \label{fig:zhishi-hudong-relatedpages}
\end{figure}

\begin{figure}[tb!]
    \captionsetup[subfigure]{justification=centering}
    \begin{subfigure}{0.9\textwidth}
    \includegraphics[width=\textwidth]{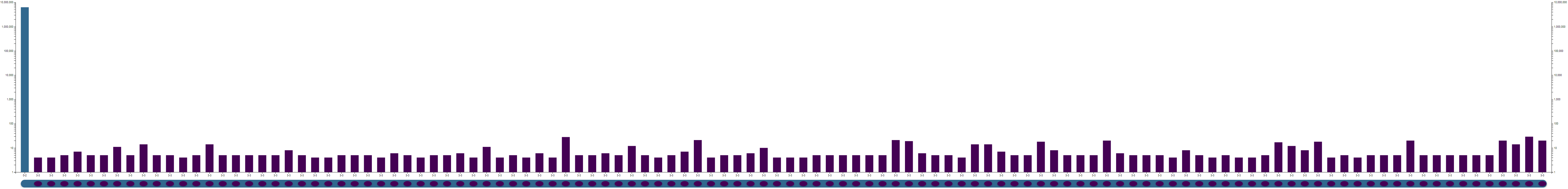}
    \subcaption{}\label{fig:dimacs9-W-scale3}
    \end{subfigure}
    \hspace{0.05cm}
    \begin{subfigure}{0.055\textwidth}
    \includegraphics[width=\textwidth]{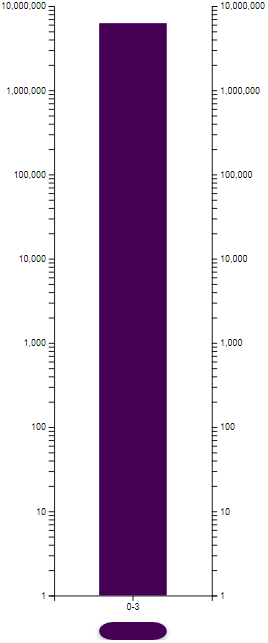}
    \caption{}\label{fig:dimacs9-W-scale4}
    \end{subfigure}
    \caption{The treebar maps for dimacs9-W with coreness scale $1:3$ (a) and $1:4$~(b).}\label{fig:dimacs9-W}
\end{figure}

\section{Conclusions and Open Problems}

We have presented a metaphor, called treebar map and based on the concept of coreness, for the schematic representation of huge graphs. Such representation can be efficiently computed exploiting a data structure called core-connectivity tree. We have shown the effectiveness of the metaphor presenting the schematic representations of the graphs contained in a widely used graph library.
Several problems deserve further investigation.
\begin{enumerate}
\item \cref{th:core-connectivity-tree} shows that the core-connectivity tree of an $n$-vertex and $m$-edges graph can be computed in $O((n+m)\alpha(n))$ time. Can this upper bound be lowered to $O(n+m)$ time?
\item The proposed metaphor reaches its limits in the example of \cref{fig:dimacs9-W}, dedicated to a graph with a peculiar structure. Despite its extensive size, with over six million vertices and seven million edges, the maximum vertex degree is merely 9, resulting in a maximum coreness of only 3. The corresponding treebar map (\cref{fig:dimacs9-W-scale3}) becomes too large to be represented with coreness scale $1:3$. Conversely, employing a coreness scale of $1:4$ (\cref{fig:dimacs9-W-scale4}) leads to the loss of significant details, rendering it less informative. We have the same behaviour for graphs dimacs9-CTR and dimacs9-USA which are omitted. It would be interesting to equip treebar maps with new features for these specific cases.
\end{enumerate}

\newpage

\bibliographystyle{splncs04}
\bibliography{bibliography}

\clearpage
\appendix

\section{Treebar Maps Produced in Our Experiments}
\begin{figure}[h!]
    \centering
    \includegraphics[width=\textwidth]{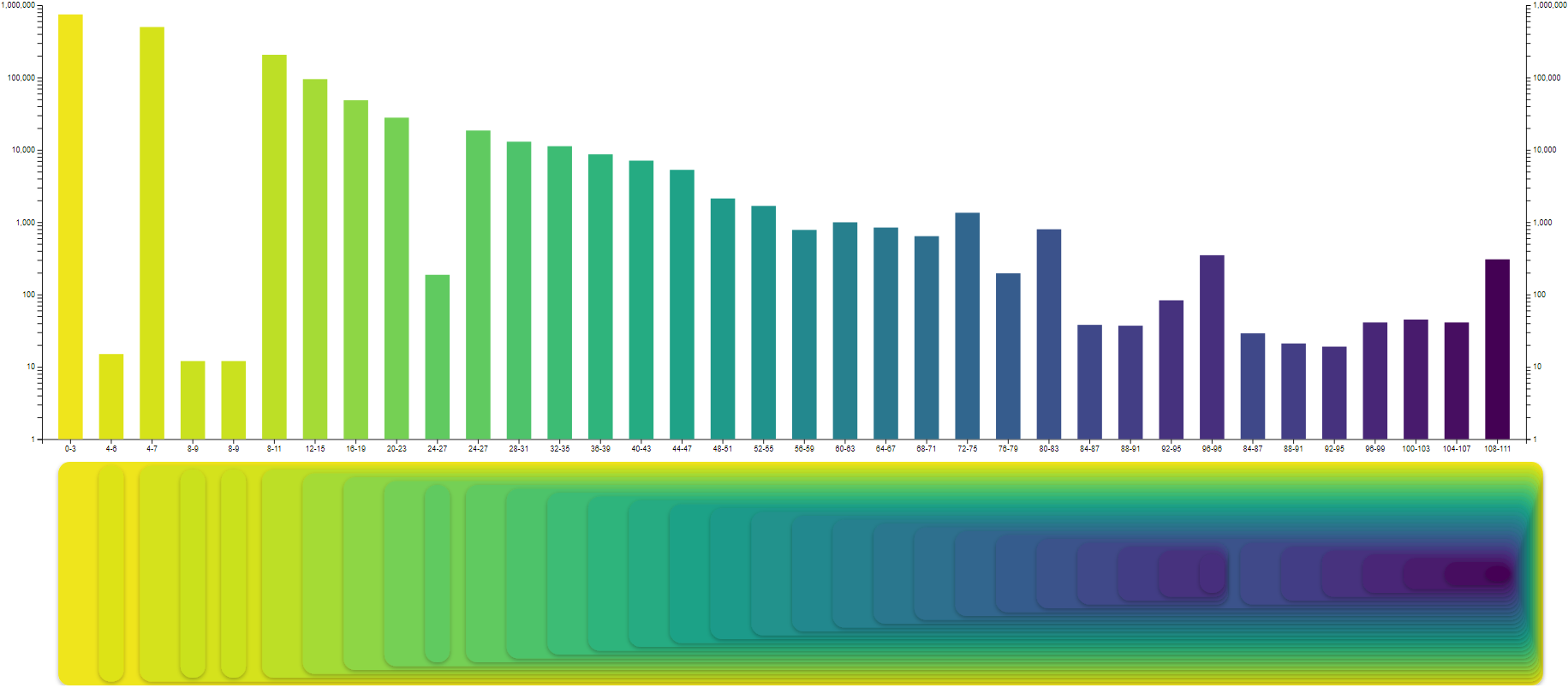}
    \caption{Network as-skitter, coreness scale $=1:4$}
    \label{fig:as-skitter}
\end{figure}
\begin{figure}[tb!]
    \centering
    \includegraphics[width=\textwidth]{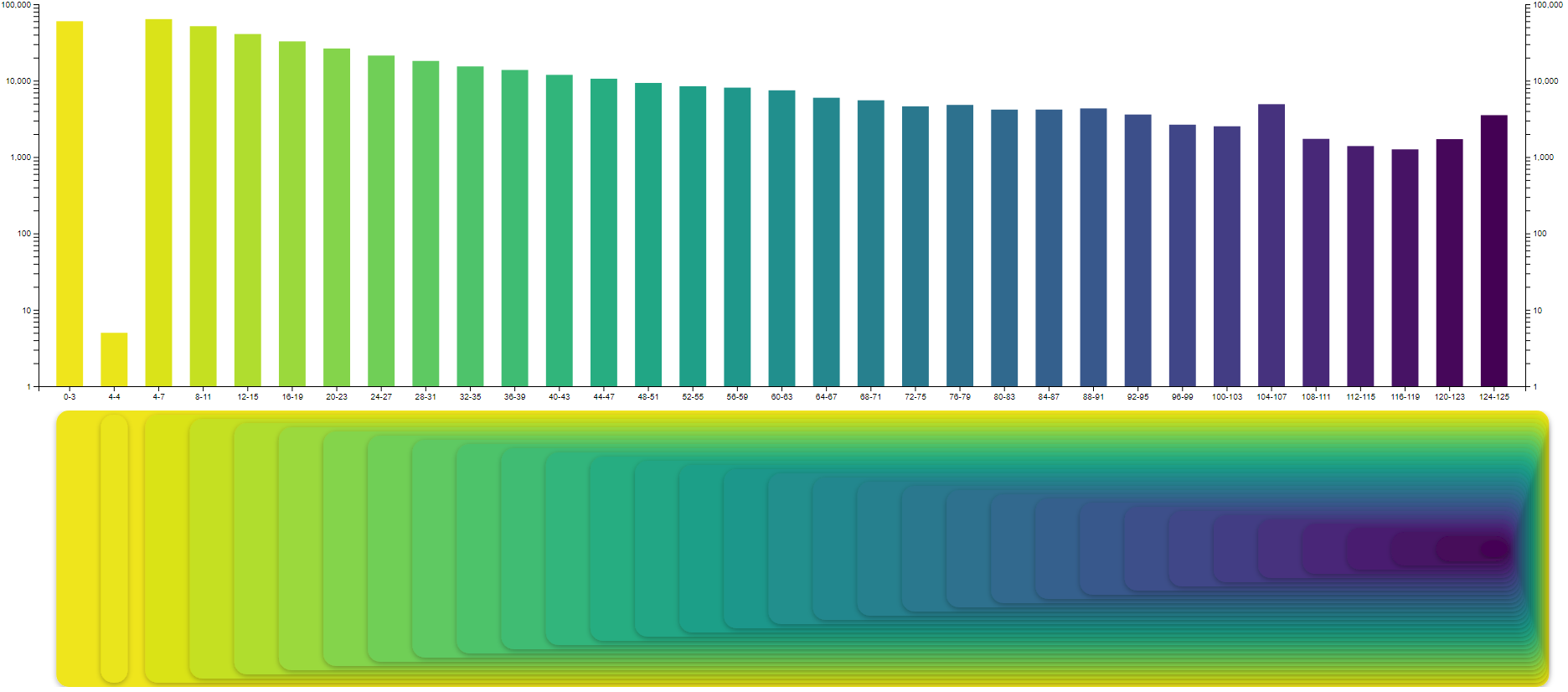}
    \caption{Network higgs-twitter-social, coreness scale $=1:4$}
    \label{fig:higgs-twitter-social}
\end{figure}
\begin{figure}[tb!]
    \centering
    \includegraphics[width=\textwidth]{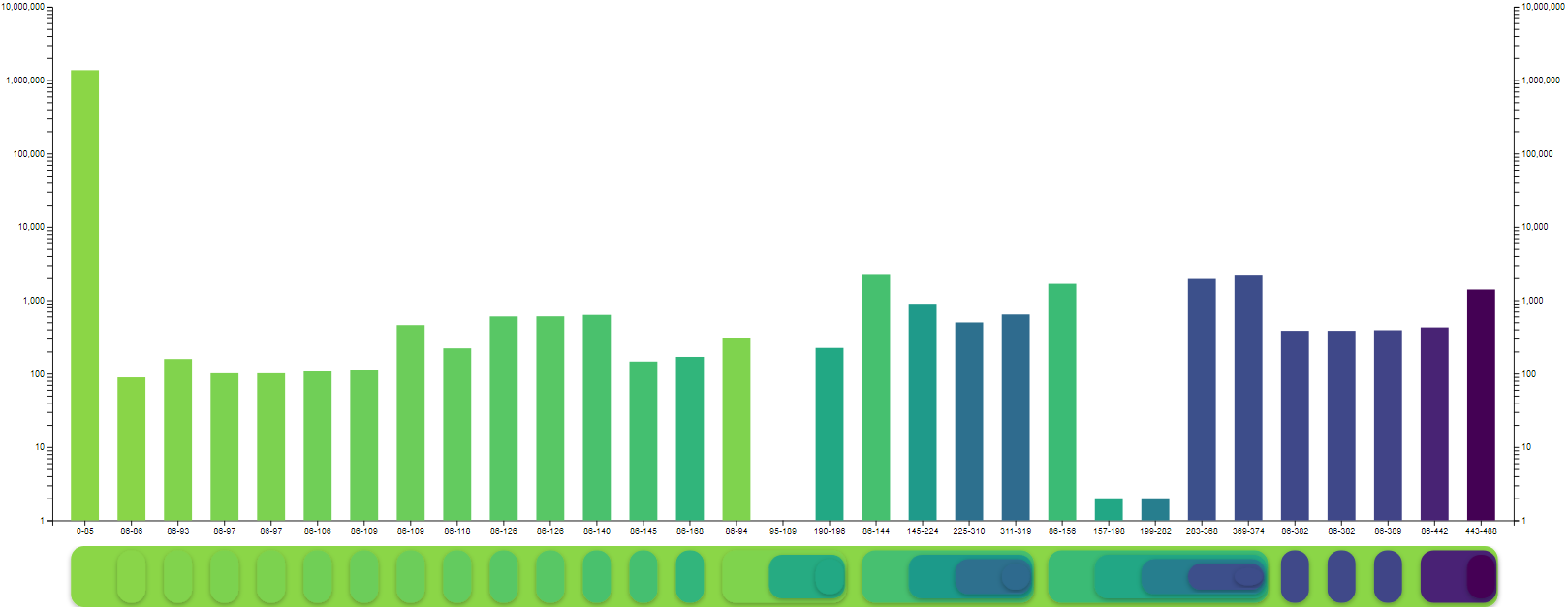}
    \caption{Network dimacs10-in-2004, coreness scale $=1:86$}
    \label{fig:dimacs10-in-2004}
\end{figure}
\begin{figure}[tb!]
    \centering
    \includegraphics[width=\textwidth]{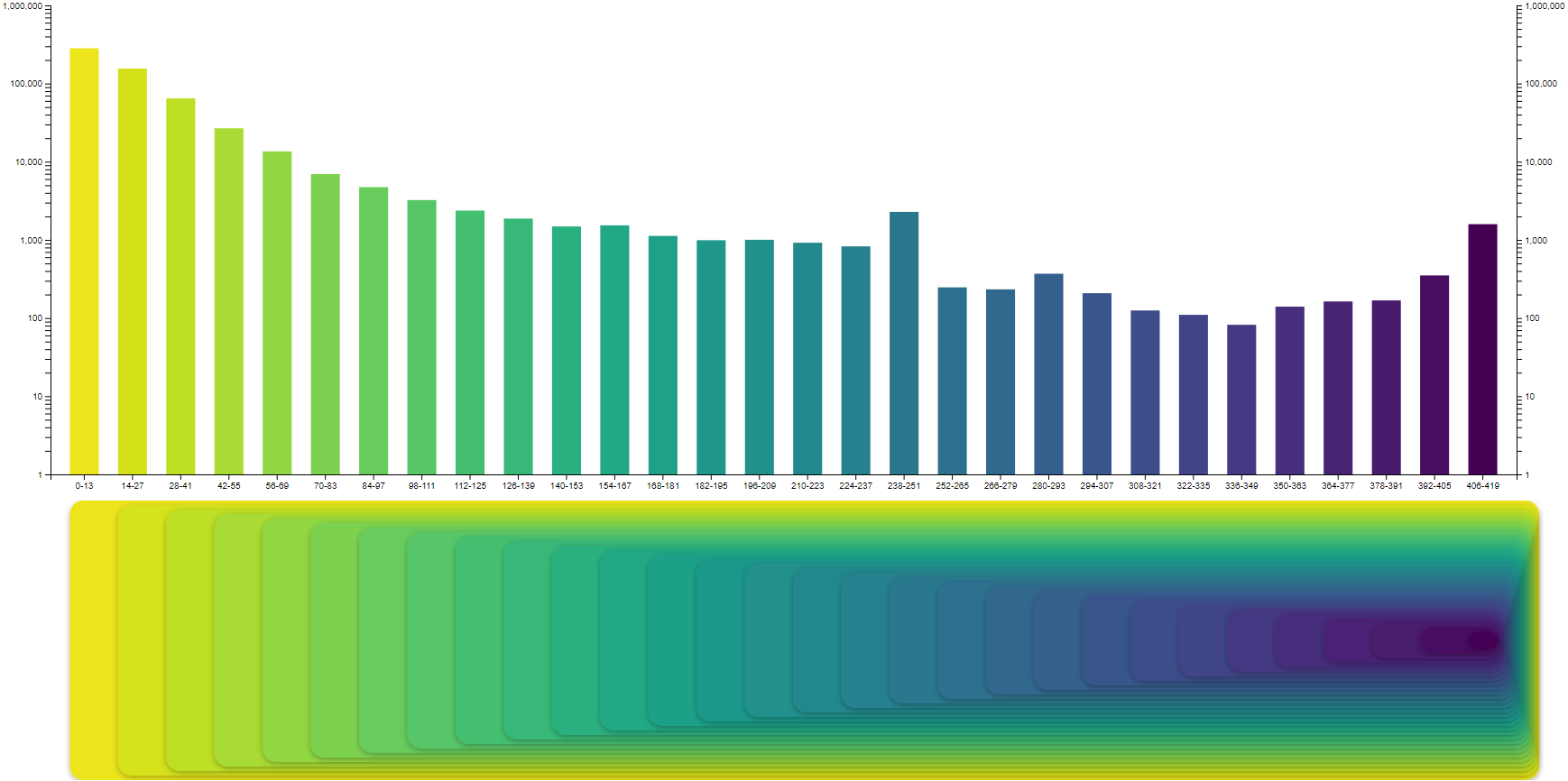}
    \caption{Network petster-catdog-friend, coreness scale $=1:14$}
    \label{fig:petster-catdog-friend}
\end{figure}
\begin{figure}[tb!]
    \centering
    \includegraphics[width=\textwidth]{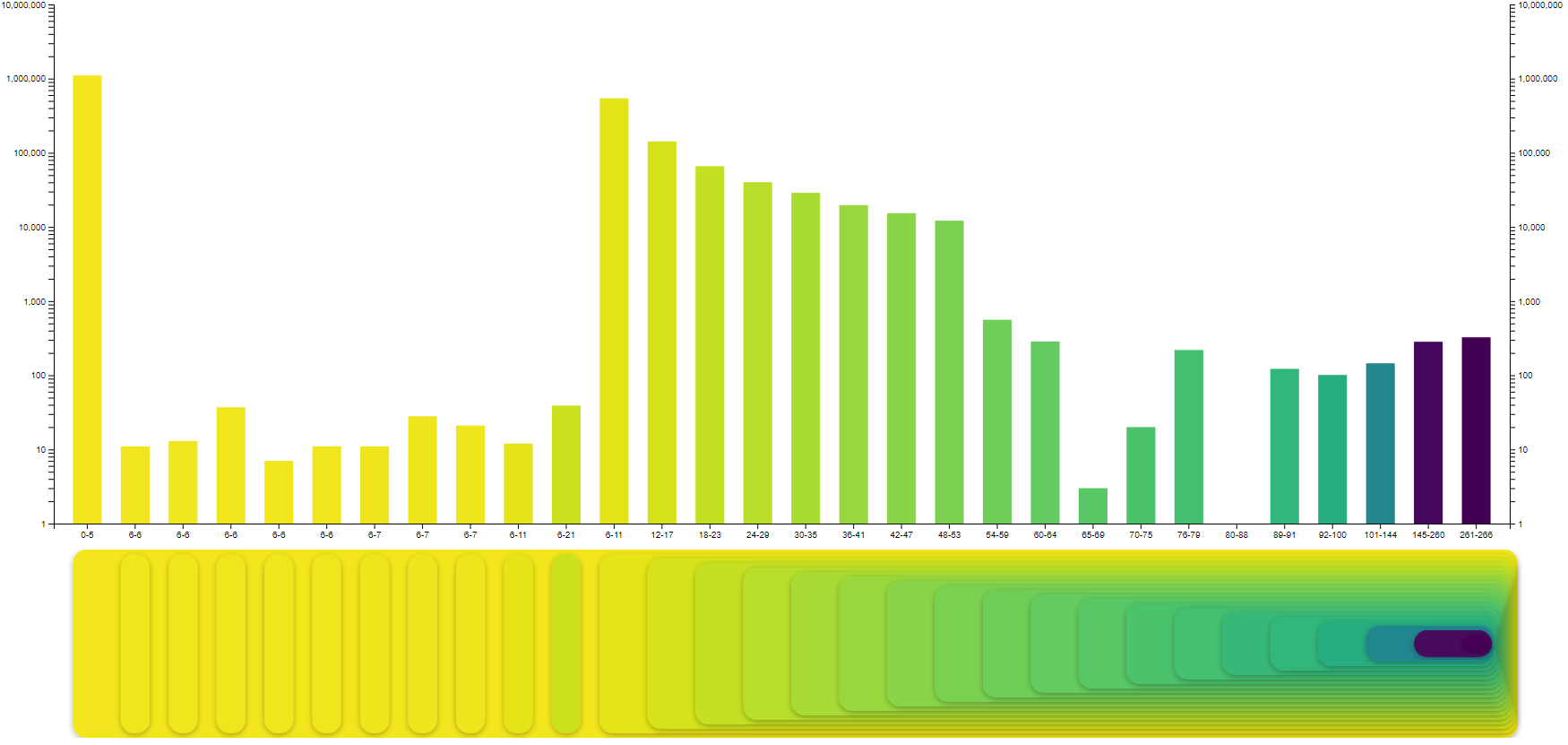}
    \caption{Network zhishi-hudong-internallink, coreness scale $=1:6$}
    \label{fig:zhishi-hudong-internallink}
\end{figure}
\begin{figure}[tb!]
    \centering
    \includegraphics[width=\textwidth]{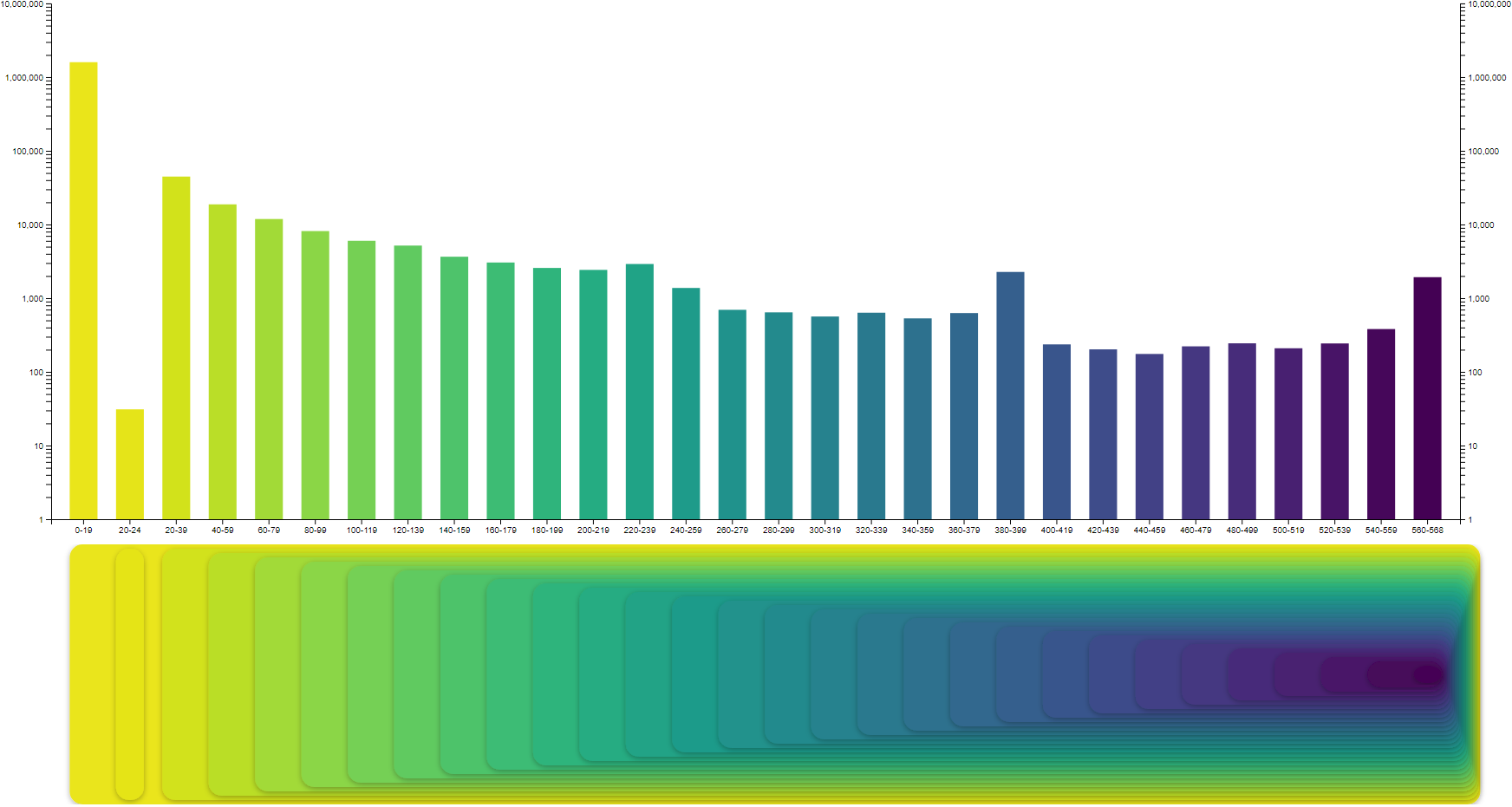}
    \caption{Network flickr-links, coreness scale $=1:20$}
    \label{fig:flickr-links}
\end{figure}
\begin{figure}[tb!]
    \centering
    \includegraphics[width=\textwidth]{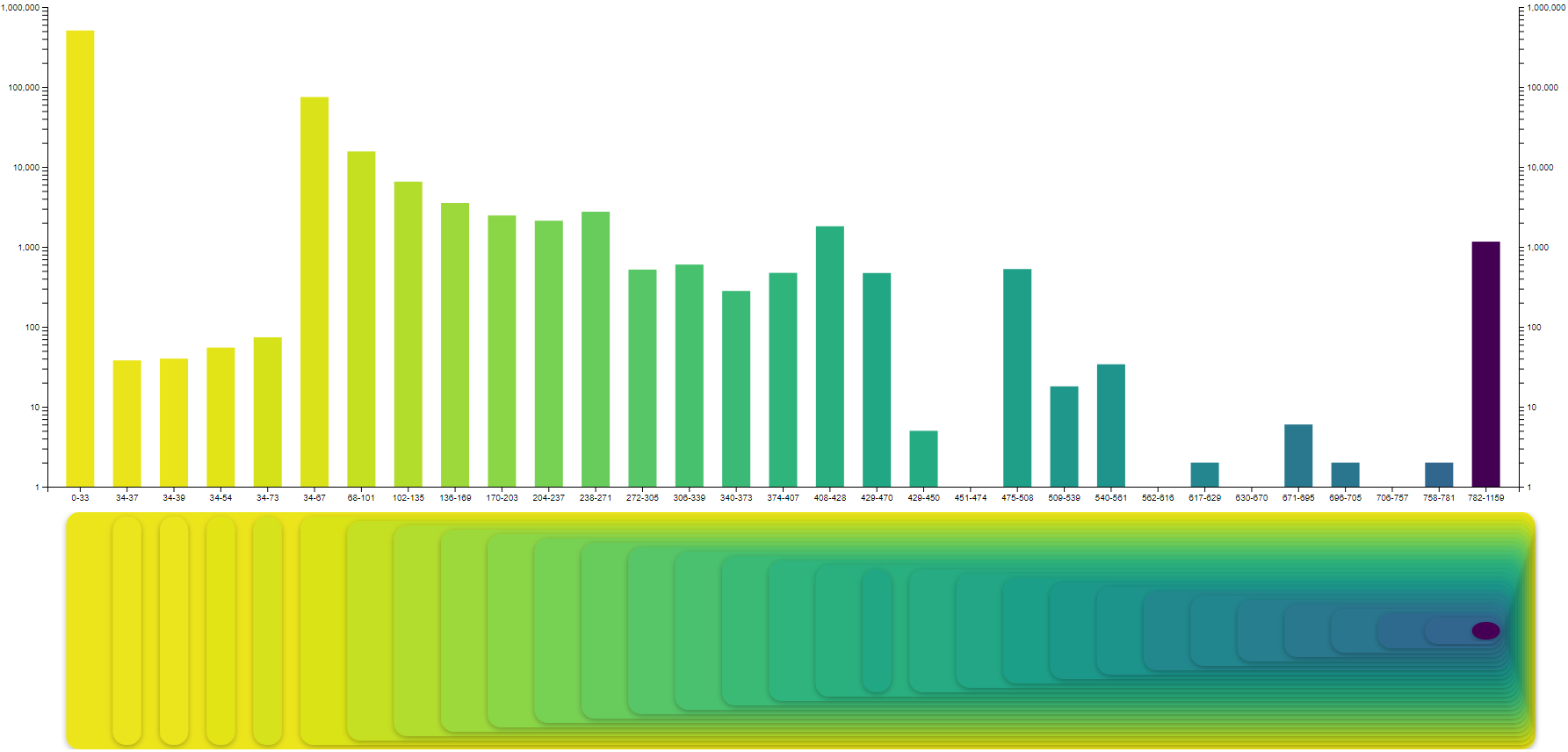}
    \caption{Network petster-carnivore, coreness scale $=1:34$}
    \label{fig:petster-carnivore}
\end{figure}
\begin{figure}[tb!]
    \centering
    \includegraphics[width=\textwidth]{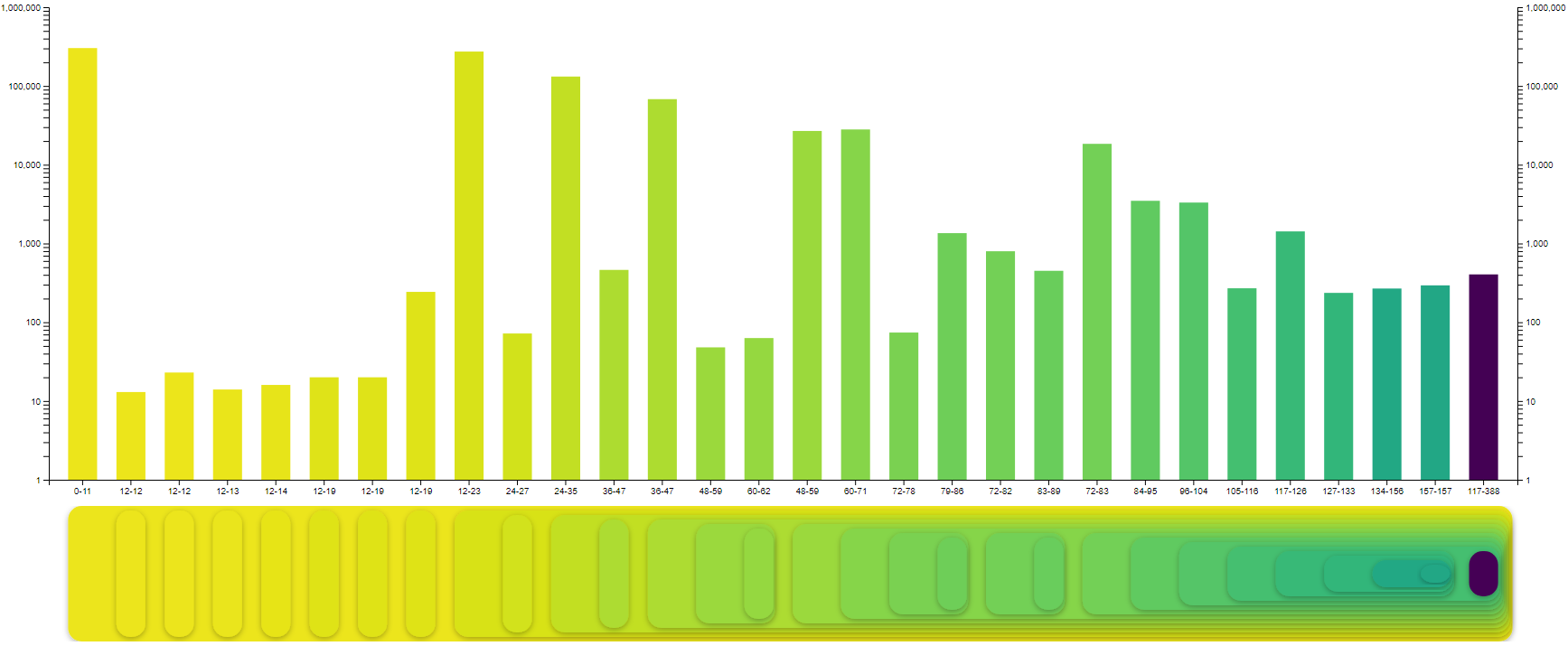}
    \caption{Network dimacs10-eu-2005, coreness scale $=1:12$}
    \label{fig:dimacs10-eu-2005}
\end{figure}
\begin{figure}[tb!]
    \centering
    \includegraphics[width=\textwidth]{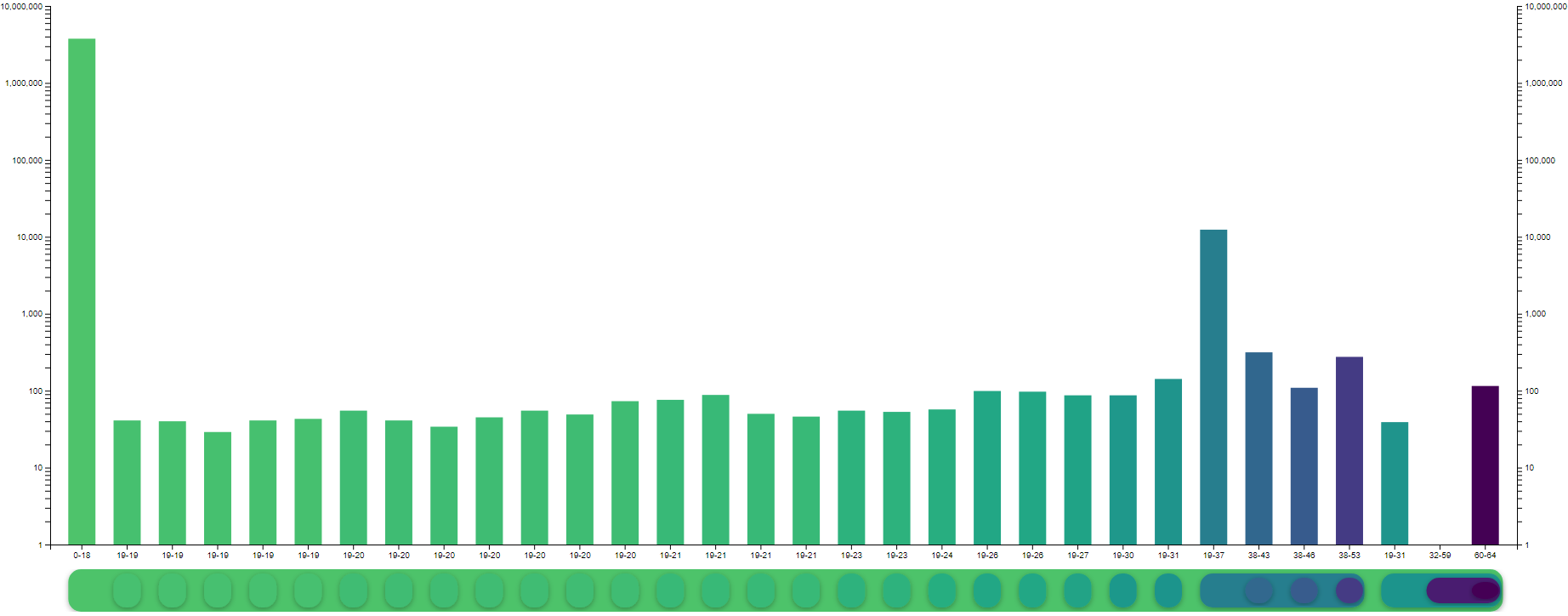}
    \caption{Network patentcite, coreness scale $=1:19$}
    \label{fig:patentcite}
\end{figure}

\begin{figure}[tb!]
    \centering
    \includegraphics[width=\textwidth]{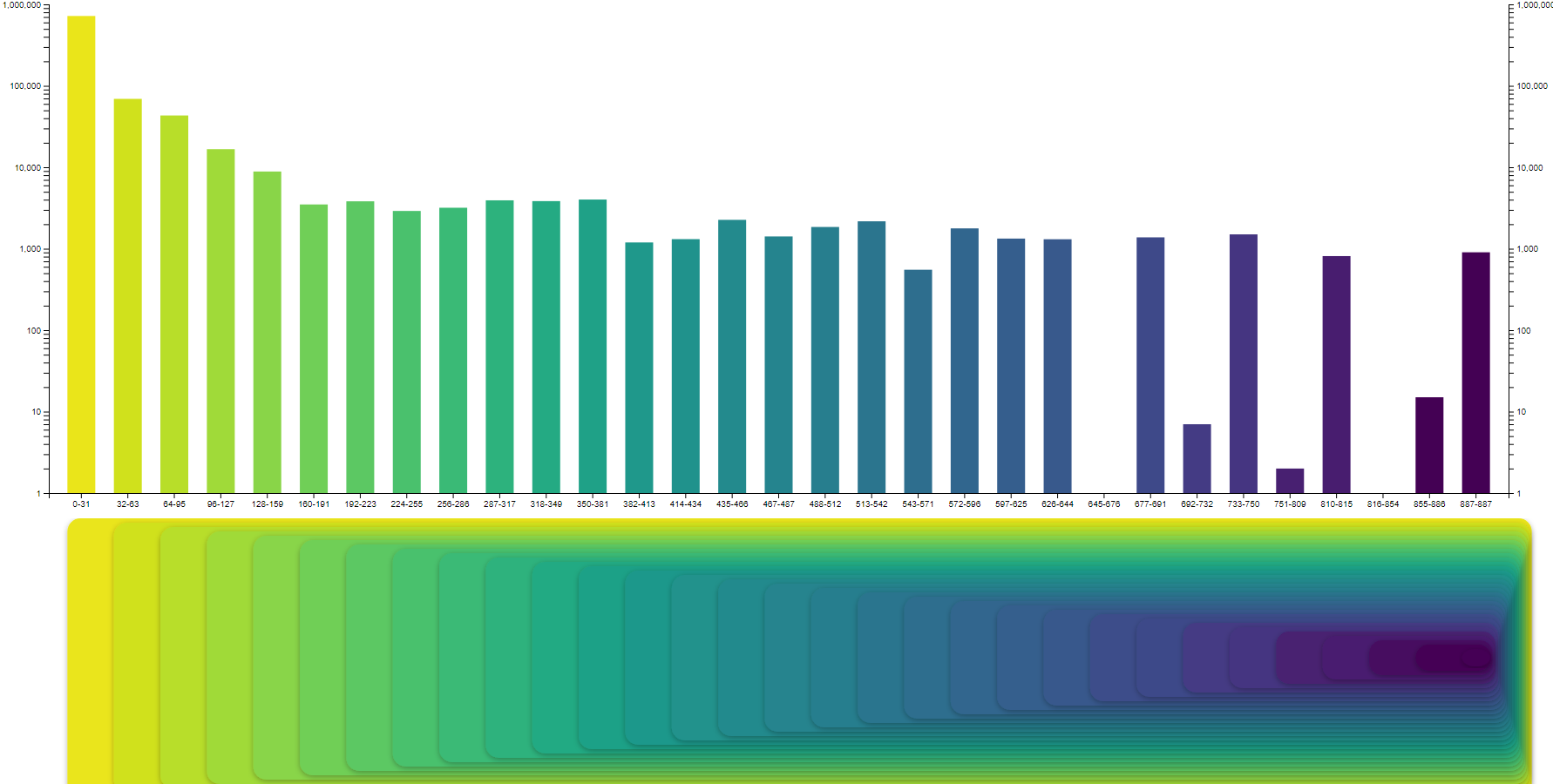}
    \caption{Network wikipedia\_link\_ro, coreness scale $=1:32$}
    \label{fig:wikipedia_link_ro}
\end{figure}
\begin{figure}[tb!]
    \centering
    \includegraphics[width=\textwidth]{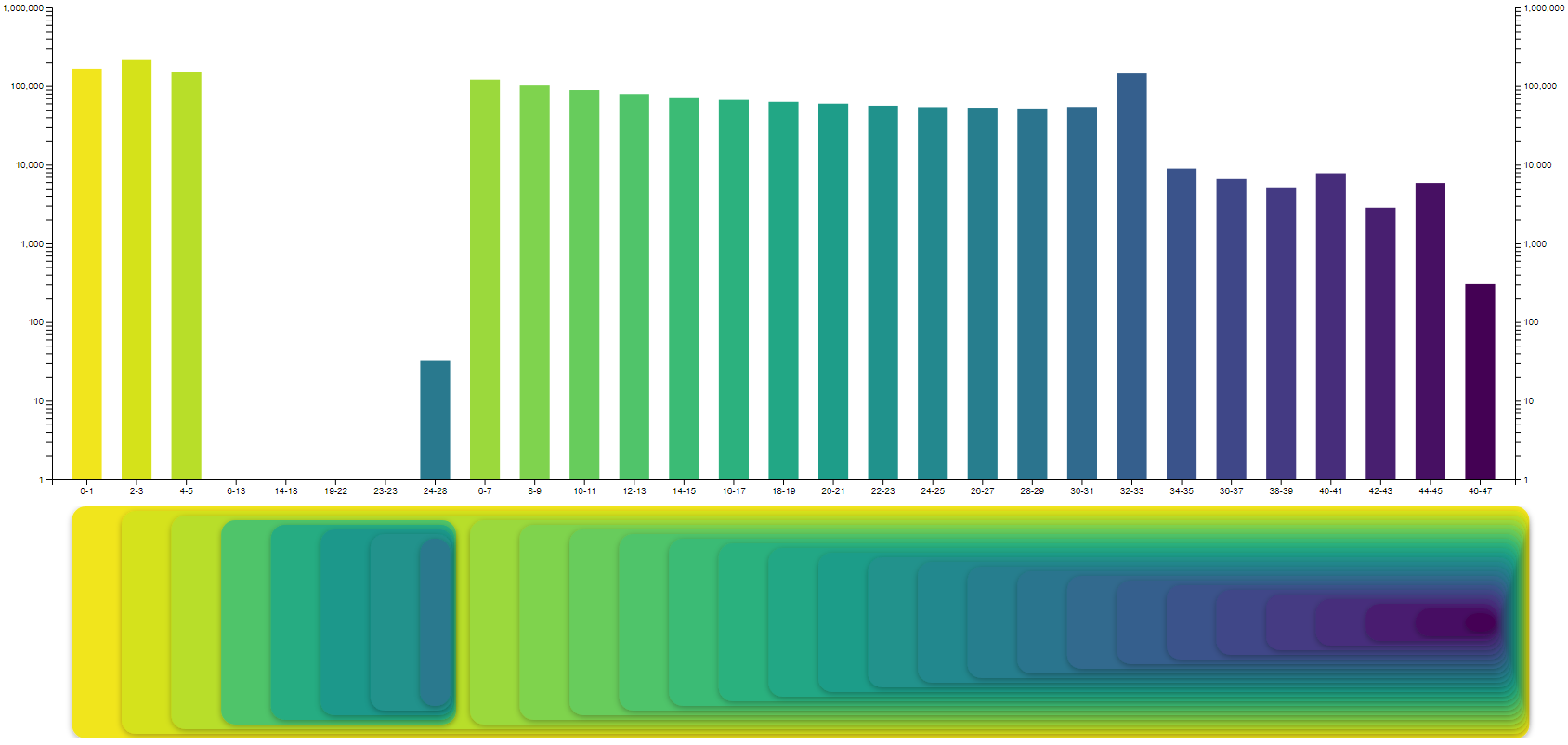}
    \caption{Network soc-pokec-relationships, coreness scale $=1:2$}
    \label{fig:soc-pokec-relationships}
\end{figure}

\begin{figure}[tb!]
    \centering
    \includegraphics[width=\textwidth]{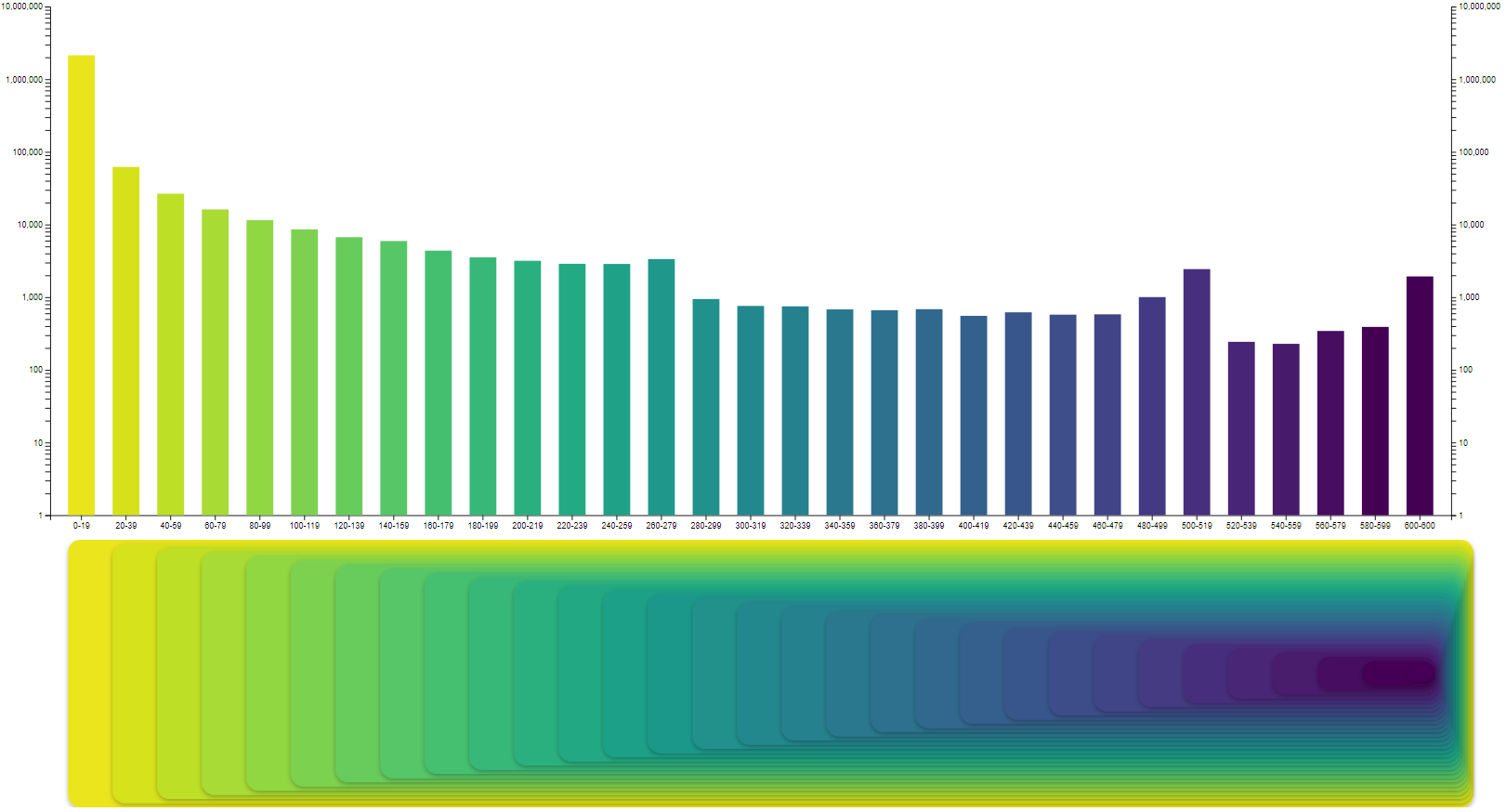}
    \caption{Network flickr-growth, coreness scale $=1:20$}
    \label{fig:flickr-growth}
\end{figure}

\begin{figure}[tb!]
    \centering
    \includegraphics[width=\textwidth]{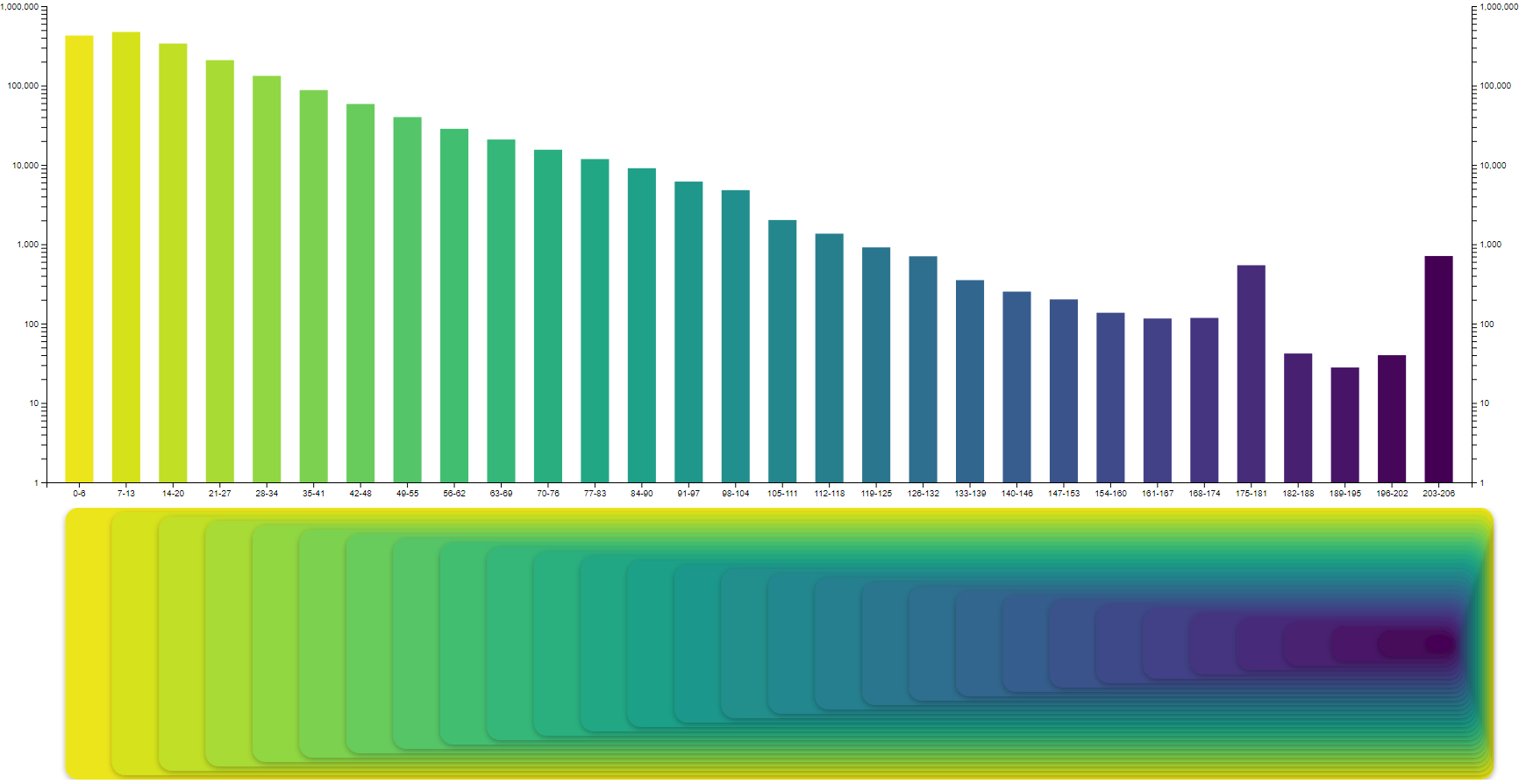}
    \caption{Network wikipedia-growth, coreness scale $=1:7$}
    \label{fig:wikipedia-growth}
\end{figure}

\begin{figure}[tb!]
    \centering
    \includegraphics[width=\textwidth]{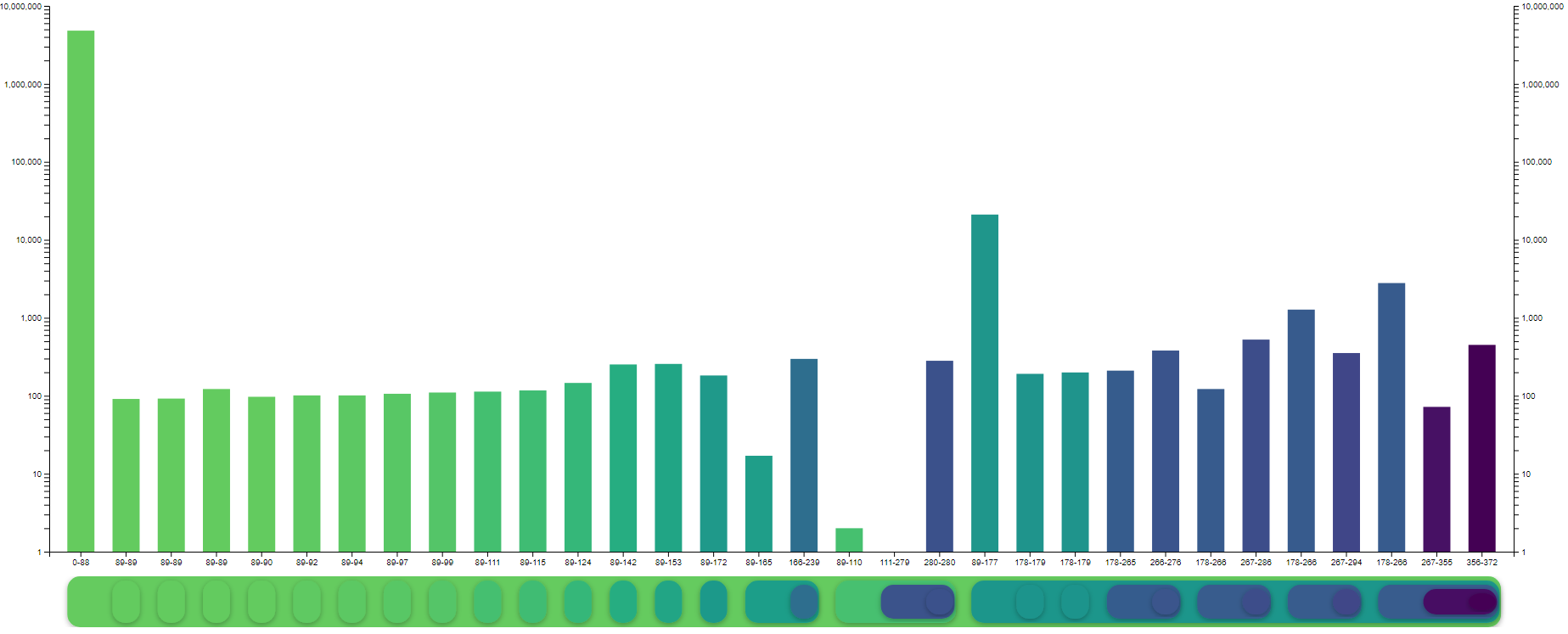}
    \caption{Network soc-LiveJournal1, coreness scale $=1:89$}
    \label{fig:soc-LiveJournal1}
\end{figure}

\begin{figure}[tb!]
    \centering
    \includegraphics[width=\textwidth]{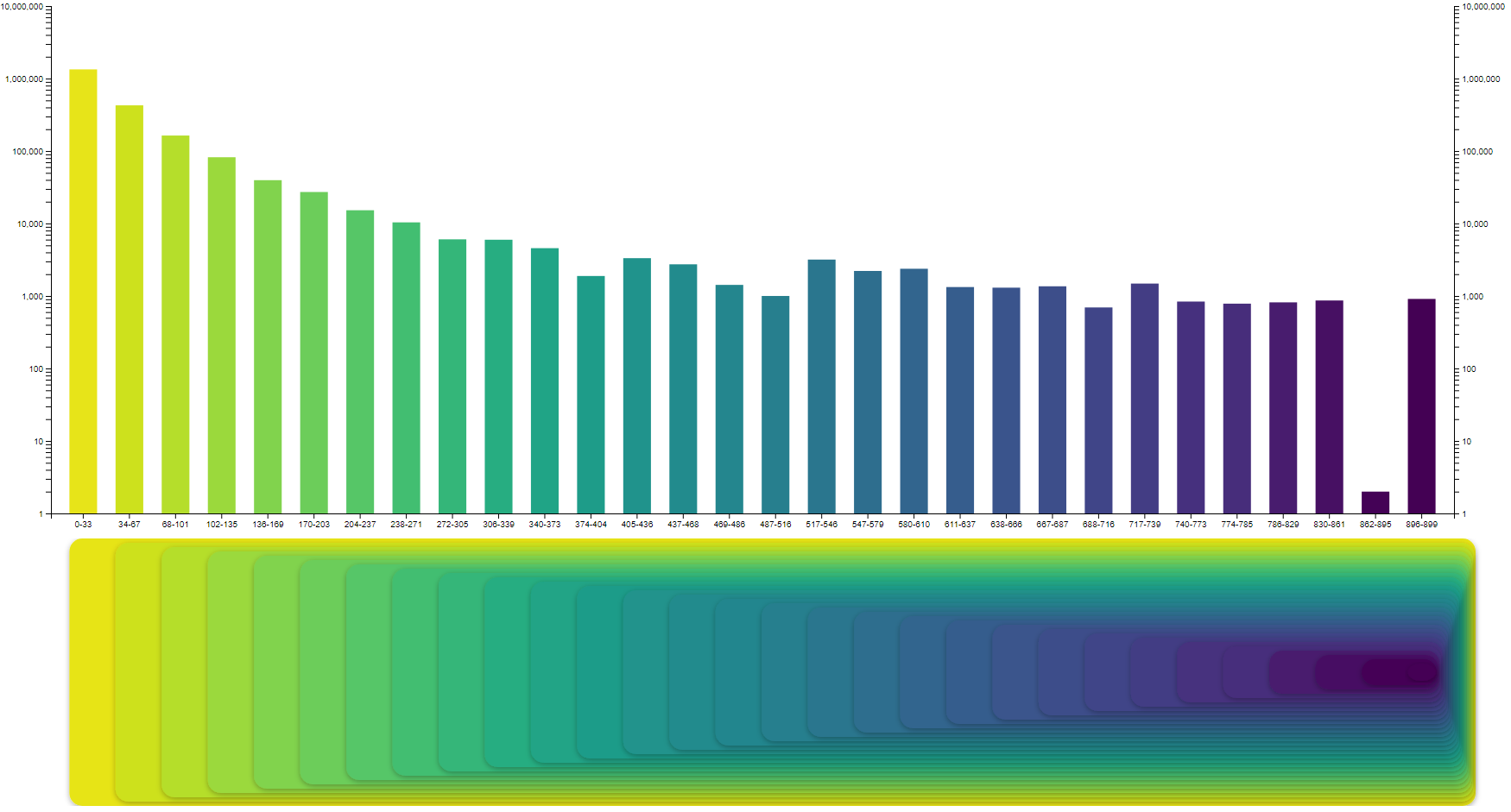}
    \caption{Network wikipedia\_link\_it, coreness scale $=1:34$}
    \label{fig:wikipedia_link_it}
\end{figure}

\begin{figure}[tb!]
    \centering
    \includegraphics[width=\textwidth]{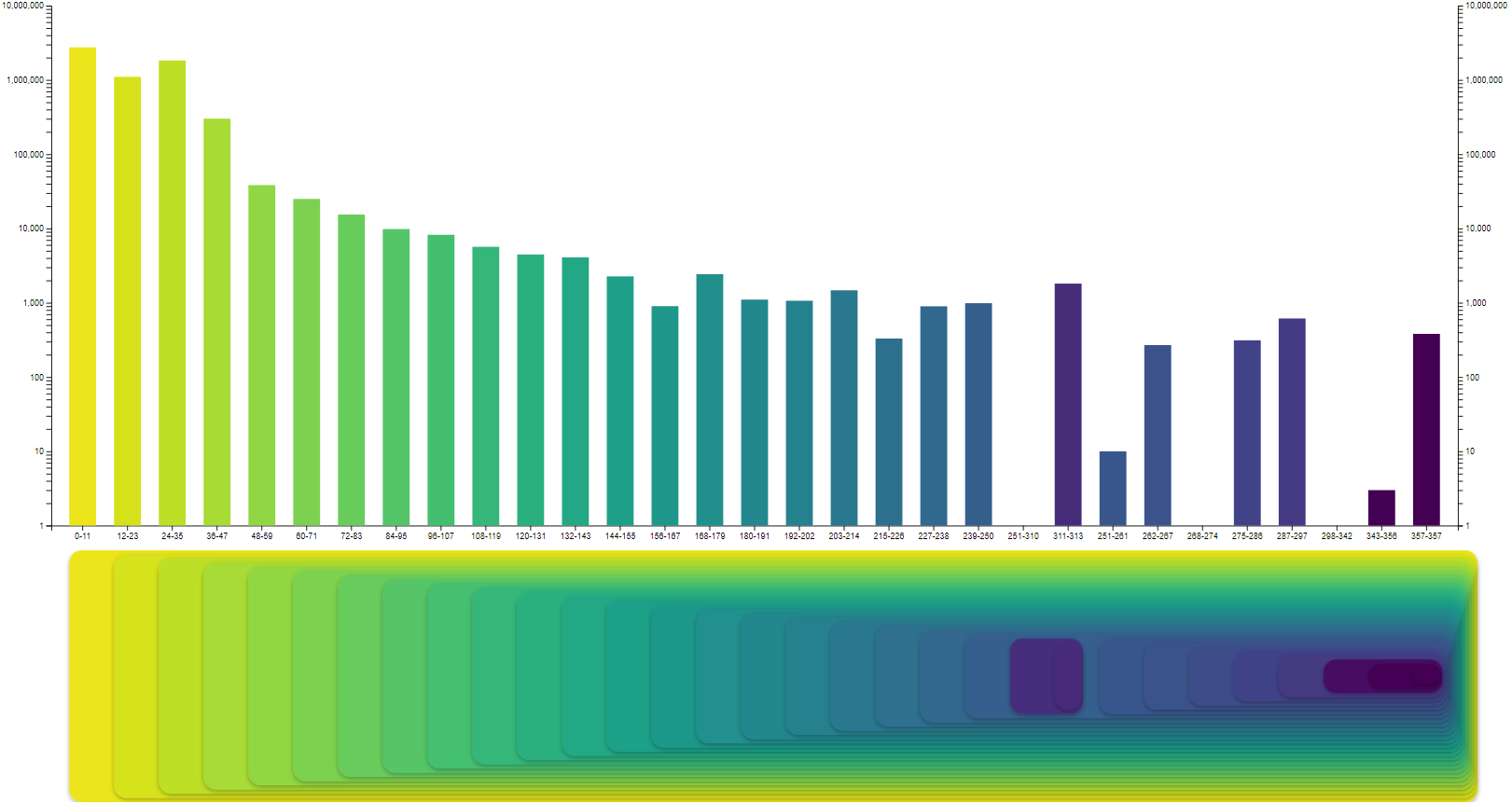}
    \caption{Network wikipedia\_link\_sv, coreness scale $=1:12$}
    \label{fig:wikipedia_link_sv}
\end{figure}

\begin{figure}[tb!]
    \centering
    \includegraphics[width=\textwidth]{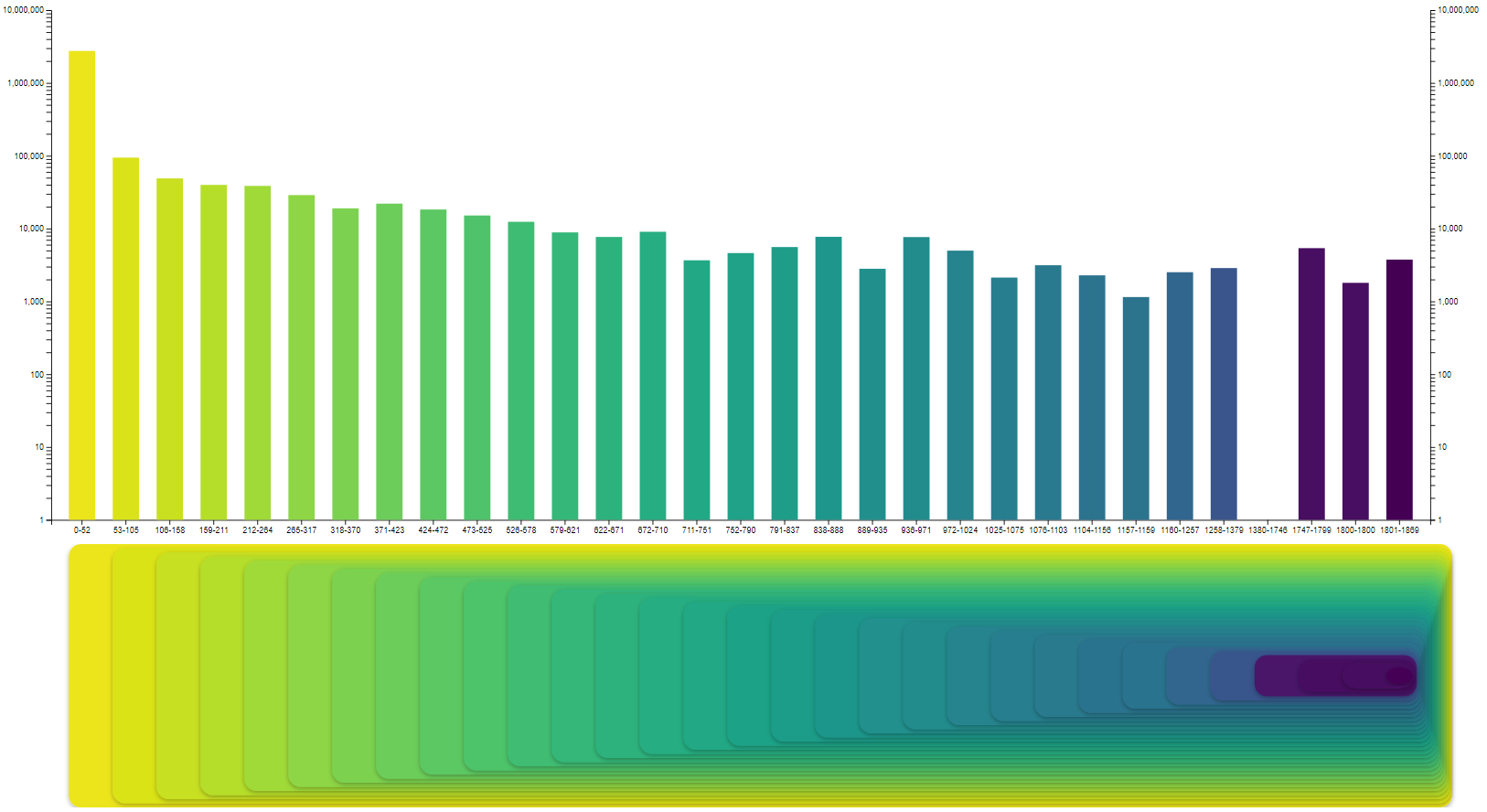}
    \caption{Network wikipedia\_link\_sr, coreness scale $=1:53$}
    \label{fig:wikipedia_link_sr}
\end{figure}
\begin{figure}[tb!]
    \centering
    \includegraphics[width=\textwidth]{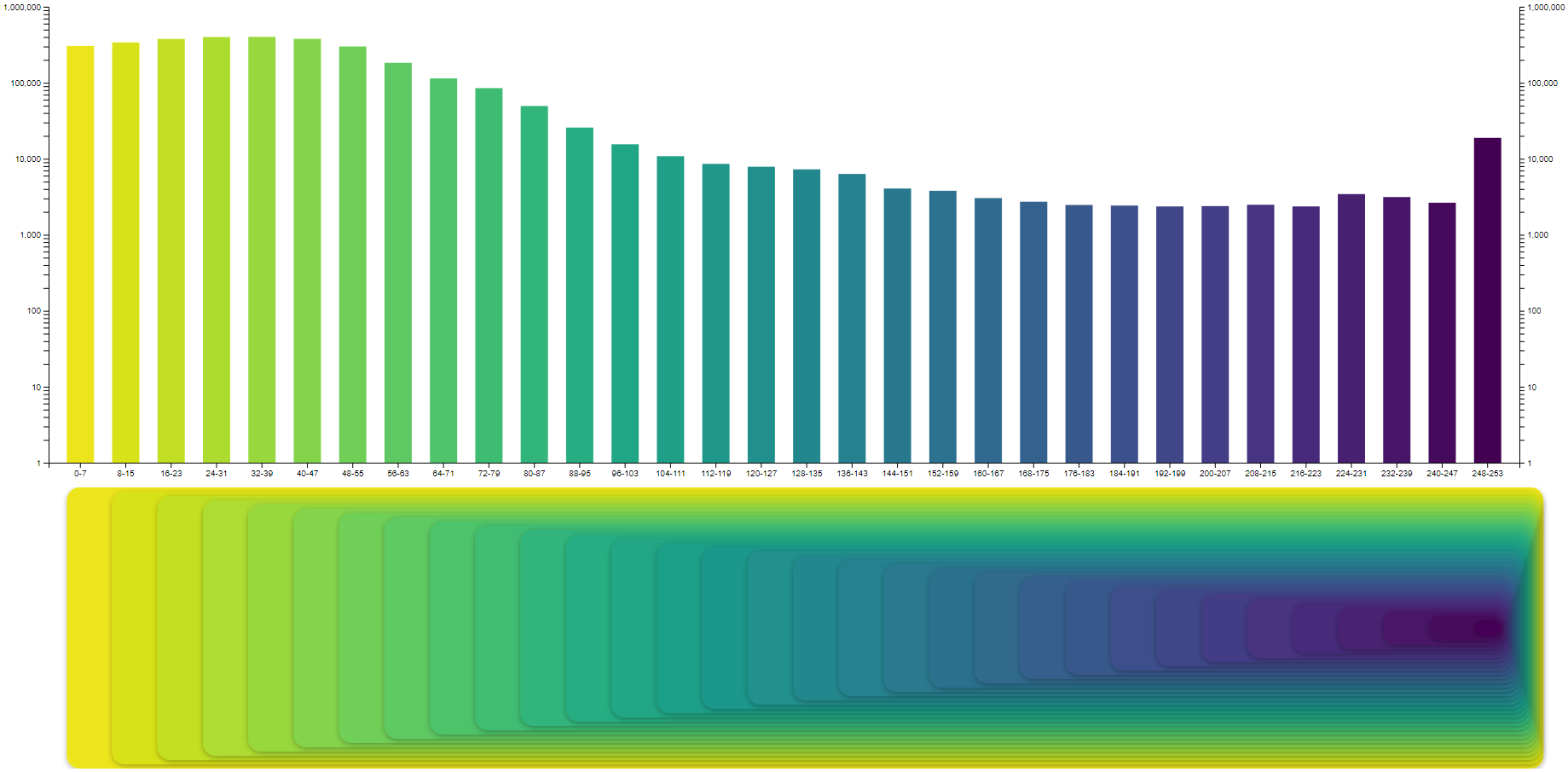}
    \caption{Network orkut-links, coreness scale $=1:8$}
    \label{fig:orkut-links}
\end{figure}
\begin{figure}[tb!]
    \centering
    \includegraphics[width=\textwidth]{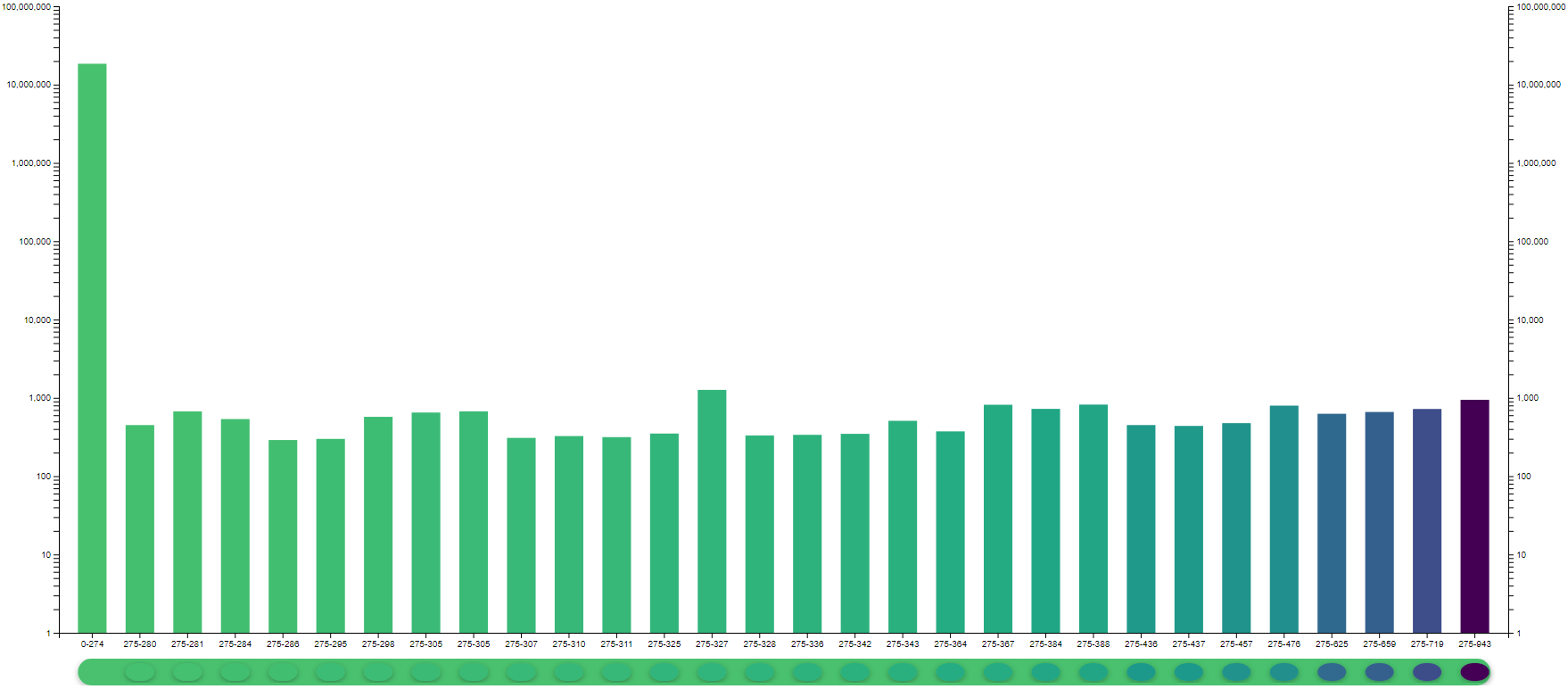}
    \caption{Network dimacs10-uk-2002, coreness scale $=1:275$}
    \label{fig:dimacs10--uk-2002}
\end{figure}
\begin{figure}[tb!]
    \centering
    \includegraphics[width=\textwidth]{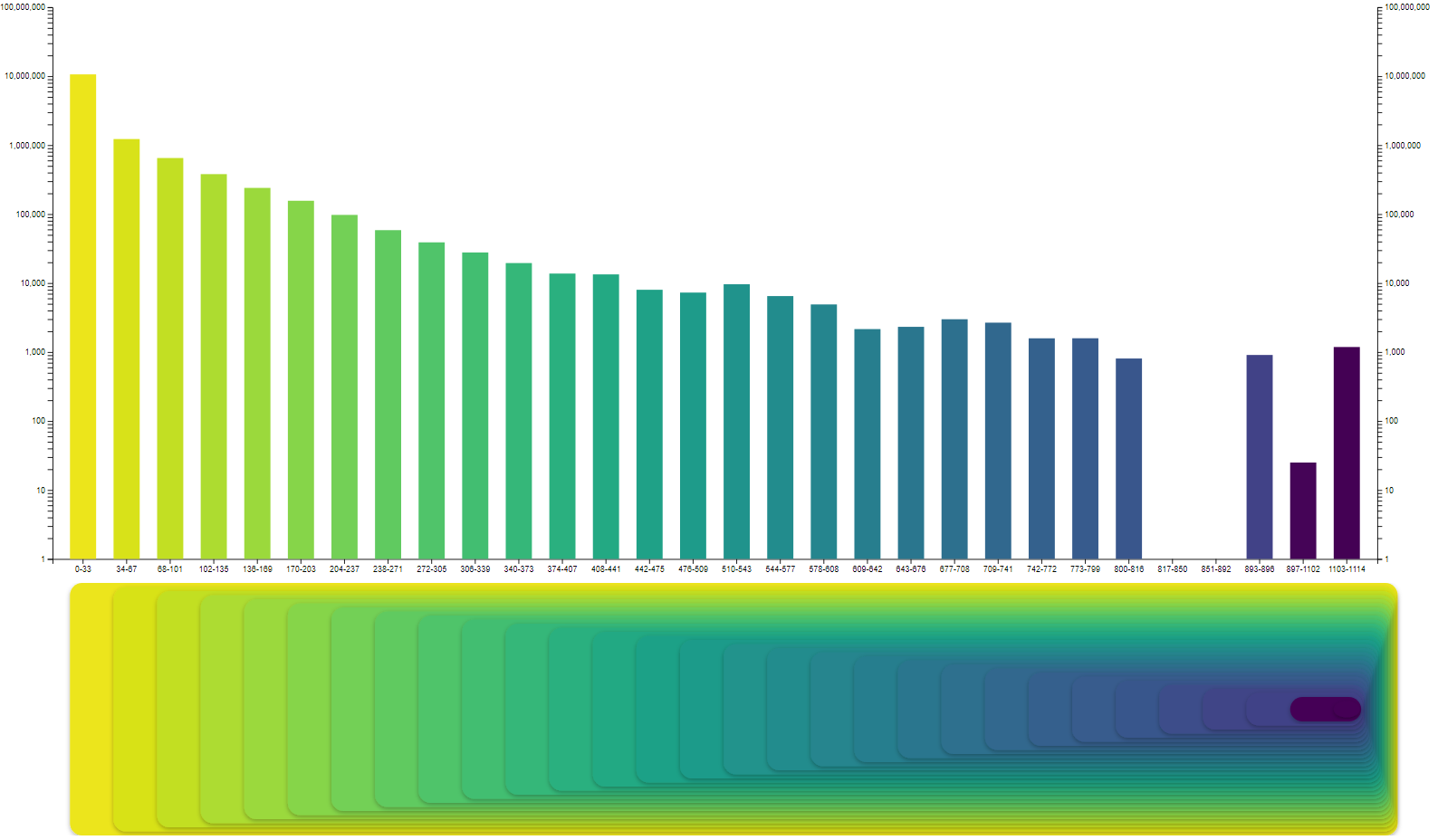}
    \caption{Network wikipedia\_link\_en, coreness scale $=1:34$}
    \label{fig:wikipedia_link_en}
\end{figure}
\begin{figure}[tb!]
    \centering
    \includegraphics[width=\textwidth]{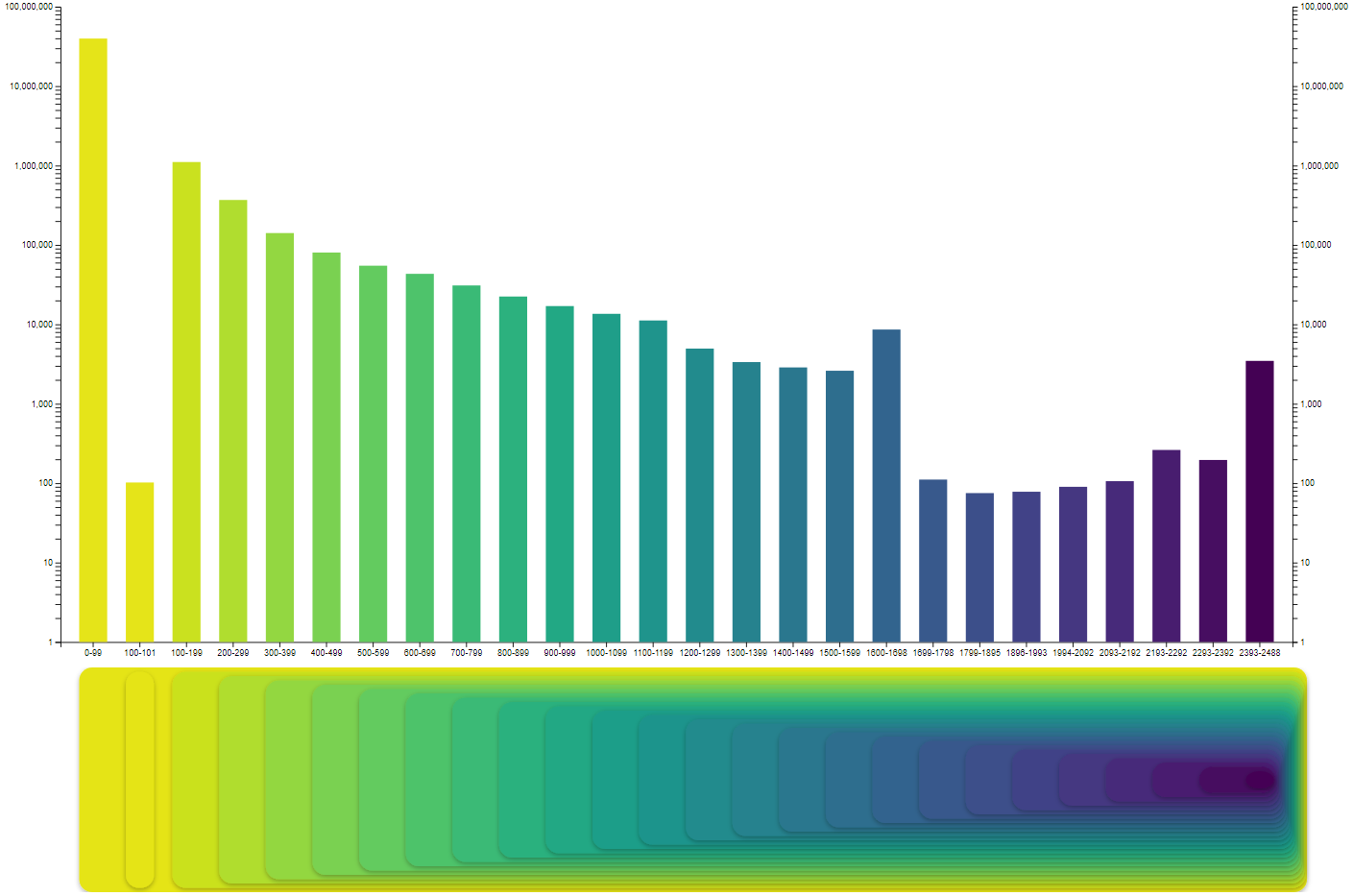}
    \caption{Network twitter, coreness scale $=1:100$}
    \label{fig:twitter}
\end{figure}
\begin{figure}[tb!]
    \centering
    \includegraphics[width=\textwidth]{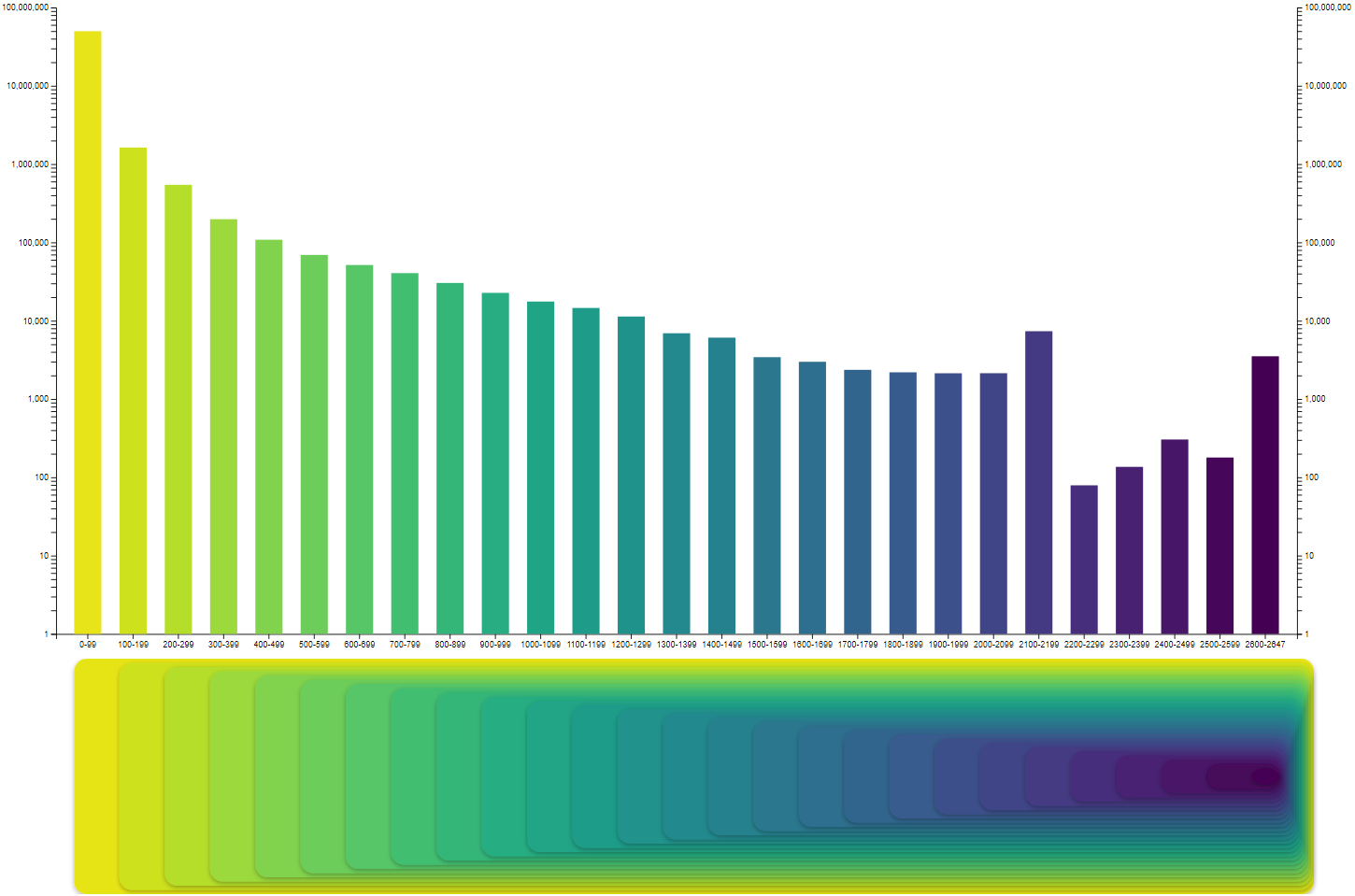}
    \caption{Network twitter\_mpi, coreness scale $=1:100$}
    \label{fig:twitter_mpi}
\end{figure}
\begin{figure}[tb!]
    \centering
    \includegraphics[width=\textwidth]{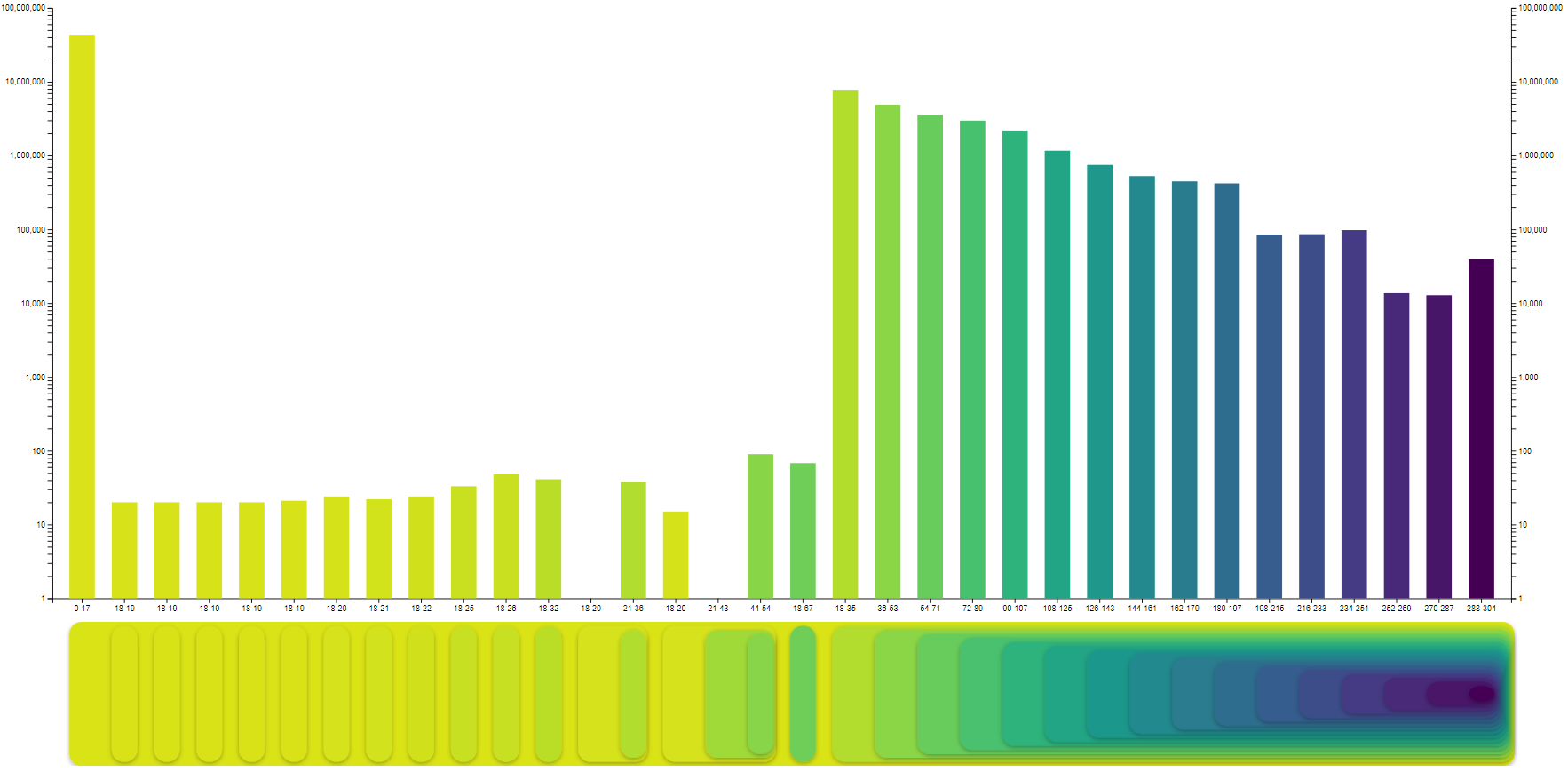}
    \caption{Network friendster, coreness scale $=1:18$}
    \label{fig:friendster}
\end{figure}

\end{document}